\begin{document}

\newcommand{\ket}[1]{\vert #1 \rangle}
\newcommand{\bra} [1] {\langle #1 \vert}
\newcommand{\braket}[2]{\langle #1 | #2 \rangle}
\newcommand{\ketbra}[2]{| #1 \rangle \langle #2 |}
\newcommand{\proj}[1]{\ket{#1}\bra{#1}}
\newcommand{\mean}[1]{\langle #1 \rangle}
\newcommand{\opnorm}[1]{|\!|\!|#1|\!|\!|_2}

\newtheoremstyle{break}
  {\topsep}{\topsep}%
  {\itshape}{}%
  {\bfseries}{}%
  {\newline}{}%
\theoremstyle{break}
\newtheorem{theorem}{Theorem}
\newtheorem{lem}{Lemma}
\newtheorem{rem}{Remark}
\newtheorem{defin}{Definition}
\newtheorem{corollary}{Corollary}
 \newtheorem{conj}{Conjecture}
 \newtheorem*{prop}{Properties}
 \newcommand{\kket}[1]{\vert\vert #1 \rangle\rangle}
 \newcommand{\bbra} [1] {\langle \langle \rangle#1 \vert\vert}
\newcommand{\mmean}[1]{\langle\langle #1 \rangle\rangle}
\newcommand{\tr}{\mathrm{Tr}}
\newcommand{\red}[1]{\textcolor{red}{#1	}}
\newcommand{\blue}[1]{\textcolor{blue}{#1	}}

\title{Realignment separability criterion assisted with filtration\\ for detecting continuous-variable entanglement}

\author{ Anaelle Hertz}
\address{
	Department of Physics, University of Toronto, Toronto, Ontario M5S 1A7, Canada
}
\affiliation{Centre for Quantum Information and Communication, \'Ecole polytechnique de Bruxelles, CP 165, Universit\'e libre de Bruxelles, 1050 Brussels, Belgium}

\author{Matthieu Arnhem}
\affiliation{Centre for Quantum Information and Communication, \'Ecole polytechnique de Bruxelles, CP 165, Universit\'e libre de Bruxelles, 1050 Brussels, Belgium}

\author{Ali Asadian}
\affiliation{Department of Physics, Institute for Advanced Studies in Basic Sciences (IASBS), Gava Zang, Zanjan 45137-66731, Iran}
\affiliation{Vienna Center for Quantum Science and Technology, Atominstitut, TU Wien, 1040 Vienna, Austria}

\author{Nicolas J. Cerf}
\affiliation{Centre for Quantum Information and Communication, \'Ecole polytechnique de Bruxelles, CP 165, Universit\'e libre de Bruxelles, 1050 Brussels, Belgium}

\begin{abstract}
We introduce a weak form of the realignment separability criterion which is particularly suited to detect continuous-variable entanglement and is physically implementable (it requires linear optics transformations and homodyne detection). Moreover, we define a family of states, called Schmidt-symmetric states, for which 
the weak realignment criterion reduces to the original formulation of the realignment criterion, making it even more valuable as it is easily computable especially in higher dimensions. Then, we focus in particular on Gaussian states and introduce a filtration procedure based on noiseless amplification or attenuation, which enhances the entanglement detection sensitivity. In some specific examples, it does even better than the original realignment criterion.
\end{abstract}

\maketitle

\nopagebreak


\section{Introduction}\label{introduction}

When it comes to mixed states, determining whether a state is entangled or not is provably a hard decision problem \cite{horodecki,Guhne2009}. Still, it has long been and it remains an active research topic  because entanglement is a key resource for quantum information processing. Both for discrete- and continuous-variable systems, various separability criteria --- conditions that must be satisfied by any separable state --- have been derived. Probably the best known criterion is the Peres–Horodecki criterion \cite{peres, horodecki1996}, also called the positive partial transpose (PPT) criterion. Introduced for discrete-variable systems, it states that if a quantum state is separable, then its partial transpose must remain physical (i.e., positive semidefinite). This PPT condition is, in general, only a necessary condition for separability. It becomes sufficient only for systems of dimensions $2 \times 2$ and $2 \times 3$ \cite{horodecki1996}. The PPT criterion was generalized to continuous variables (i.e., infinite-dimensional systems) by Duan {\it et al.} \cite{duan} and Simon \cite{00Simon}. Interestingly, it is necessary and sufficient for all $1\times n$ Gaussian states \cite{WernerWolf} and $n \times m$ bisymmetric Gaussian states \cite{Serafini}. In all other cases, when a state is entangled but its partial transpose remains positive semidefinite, we call it a bound entangled state \cite{horodecki98,horodecki}. These are entangled states from which no pure entangled state can be distilled through local (quantum) operations and classical communications (LOCC).

Many other separability criteria have been developed over years (see, e.g., \cite{shchukin,walborn, Lami,Mardani,Mihaescu}, and consult \cite{horodecki} for an older, but still relevant, review). Among them, we focus in the present paper on the realignment criterion \cite{Chen,Rudolph}. This criterion is unrelated to the PPT criterion and thereby enables the detection of some bound entangled states in both discrete-variable \cite{Chen} and continuous-variable cases \cite{Zhang}. Unfortunately, the realignment criterion happens to be generally hard to compute, especially for continuous-variable systems. To our knowledge, it has only been computed for Gaussian states by Zhang {\it et al.} \cite{Zhang} and yet,  the difficulty increases with the number of modes.

In this paper, we introduce a weaker form of the realignment criterion which is much simpler to compute and comes with a physical implementation in terms of linear optics and homodyne detection, hence it is especially suited to detect continuous-variable entanglement. It is, in general, less sensitive to entanglement than the original realignment criterion and cannot detect bound-entangled states, but it happens to be equivalent to the original realignment criterion for the class of Schmidt-symmetric states. Furthermore, we show that by supplementing this criterion with a filtration method, it is possible to greatly improve it and sometimes even surpass the original realignment criterion while keeping the simplicity of computation.

In Sec. II, we review the definition of the realignment criterion, focusing especially on the realignment map $R$. We link different formulations of this criterion and list its main properties. In Sec. III, we introduce the weak realignment criterion and show that for a class of states that we call Schmidt-symmetric, both the weak and original realignment criteria are equivalent (while the former is much easier to compute than the latter). In Sect. IV, we apply the weak realignment criterion to continuous-variable states and give  special attention to Gaussian states. The idea is to compare to the work of Zhang {\it et al.} \cite{Zhang}, which relied on the original formulation of the criterion. We notice that several entangled states remain undetected by the weak realignment criterion and, unfortunately, the latter cannot detect bound entanglement. As a solution, we introduce in Sec. V a filtration procedure that enables a better entanglement detection by bringing the state closer to a Schmidt-symmetric state, hence increasing the sensitivity of the entanglement witness.  In Sec. VI, we provide some specific examples for $1\times 1$ and $2 \times 2$ Gaussian states. In some cases, the filtration procedure supplementing the weak realignment criterion enables a better entanglement detection than the original realignment criterion. Finally, we give our conclusions in Sec. VII.

\section{Realignment criterion and realignment map}\label{sectionrealignment}

It is well known that any bipartite pure state  $\ket{\psi}_{AB}$ can be decomposed according to the  Schmidt decomposition $\vert \psi \rangle_{AB} = ~\sum_i \lambda_i \, \vert i_A \rangle \vert i_B \rangle,$  where $\vert i_A \rangle$ and $\vert i_B \rangle$ form orthonormal bases of subsystems $A$ and $B$, and the $\lambda_i$'s are non-negative real numbers satisfying $\sum_i \lambda_i^2 = 1$ known as the Schmidt coefficients \cite{NielsenChuang}. The number of nonzero coefficients is called the Schmidt  rank and denoted as $r$. A pure state is entangled if and only if $r>1$. Interestingly, the entanglement classes under LOCC transformations are uniquely determined by the Schmidt rank \cite{Nielsen}.

An analogous Schmidt decomposition can also be defined for mixed states \cite{Peres93}.
Let $\rho$ be a mixed quantum state of a bipartite system \textit{AB}, then it can be written in its operator Schmidt decomposition as
\begin{equation}
\rho = \sum_{i=1}^{r} \lambda_i \, A_i \otimes B_i,
\label{eq-Schmidt}
\end{equation}
with the Schmidt coefficients $\lambda_i$ being some non-negative real numbers, the Schmidt rank $r$ satisfying $1\leq r\leq \min\{\dim A,\dim B\}$, and with $\{A_i\}$ and $\{B_i\}$ forming orthonormal bases\footnote{If the operator is Hermitian (such as $\rho$), then the operators $A_i$ and $B_i$ can be chosen Hermitian too. But the Schmidt decomposition is not unique and there exist other possible Schmidt decompositions of an Hermitian operator with non-Hermitian operators $A_i$ and $B_i$.}  of the operator spaces for subsystems $A$ and $B$ with respect to the Hilbert-Schmidt inner product, i.e., $\tr(A_i^\dag A_j)=\tr(B^\dag_iB_j)=~\delta_{ij}$. The Schmidt coefficients $\lambda_i$ are unique for a bipartite state $\rho$ and reveal some of its characteristic features. For example, the purity of $\rho$ can be expressed as $\tr \,\rho^2 = \sum_{i=1}^r \lambda_i^2$.

Similarly as for pure states, the operator Schmidt decomposition can be employed as an entanglement criterion for mixed bipartite states; this is called the computable cross norm criterion  and is defined as follows.
\begin{theorem}[\textbf{Computable cross norm criterion} \cite{Rudolph}]
	\label{theoreal1}
Let $\rho$ be a state with the operator Schmidt decomposition $\rho = \sum_{i=1}^{r} \lambda_i A_i \otimes B_i$. If $\rho$ is separable, then $\sum_{i=1}^{r}  \lambda_i \leq 1$. Conversely, if $\sum_{i=1}^{r}  \lambda_i > 1$, then $\rho$ is entangled.
\end{theorem}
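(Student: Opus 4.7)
The plan is to recognize that the quantity $\sum_{i=1}^r \lambda_i$, viewed as a functional on bipartite operators, is in fact a norm — specifically it coincides with the trace (nuclear) norm of the realigned matrix $R(\rho)$, which is what will be formalized in Sec.~II as the realignment criterion in its alternative form. Once this is accepted, the proof reduces to a short convexity argument on the set of separable states.

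First I would record the single-term identity: for any product operator $X = A \otimes B$, the operator Schmidt decomposition has rank one, with Schmidt operators $A/\|A\|_{\mathrm{HS}}$ and $B/\|B\|_{\mathrm{HS}}$ and single Schmidt coefficient $\|A\|_{\mathrm{HS}}\,\|B\|_{\mathrm{HS}}$. In particular, if $|\alpha\rangle$ and $|\beta\rangle$ are normalized vectors, then $\||\alpha\rangle\langle\alpha|\|_{\mathrm{HS}} = \||\beta\rangle\langle\beta|\|_{\mathrm{HS}} = 1$, so the sum of Schmidt coefficients of the pure product operator $|\alpha\rangle\langle\alpha|\otimes|\beta\rangle\langle\beta|$ equals exactly $1$.

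Next I would invoke the key structural fact: the map $\rho \mapsto \sum_i \lambda_i(\rho)$ is a norm on bipartite operators, since it equals $\|R(\rho)\|_1$, the sum of singular values of the realigned matrix (the singular values of $R(\rho)$ coincide with the Schmidt coefficients of $\rho$ by construction). In particular it satisfies the triangle inequality and is positively homogeneous. Now take any separable state and write it in the standard convex form $\rho = \sum_k p_k\,|\alpha_k\rangle\langle\alpha_k| \otimes |\beta_k\rangle\langle\beta_k|$ with $p_k \geq 0$ and $\sum_k p_k = 1$. Applying the triangle inequality and the previous step term-by-term yields
\begin{equation*}
\sum_{i=1}^{r}\lambda_i(\rho) \;\leq\; \sum_k p_k \bigl\||\alpha_k\rangle\langle\alpha_k|\otimes|\beta_k\rangle\langle\beta_k|\bigr\|_R \;=\; \sum_k p_k \;=\; 1,
\end{equation*}
which is exactly the claim.

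The main obstacle is the middle step — verifying that $\sum_i \lambda_i$ really is a norm and not just a function of the decomposition. This is not quite obvious from the operator Schmidt viewpoint alone, since the decomposition is not unique in the Hermitian class; the cleanest route is to identify the Schmidt coefficients with the singular values of the realignment matrix $R(\rho)$, whereupon subadditivity is inherited from the ordinary trace norm $\|\cdot\|_1$ on matrices. Given that Sec.~II is precisely devoted to the realignment map, I would defer this identification to the discussion of $R$ and then close the proof by the two-line convexity bound above. The contrapositive $\sum_i \lambda_i > 1 \Rightarrow \rho$ entangled follows immediately.
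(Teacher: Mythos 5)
Your proof is correct, but it takes a different route from the paper's own proof of this theorem. You identify $\sum_i\lambda_i$ with the trace norm $\parallel R(\rho)\parallel_{tr}$ of the realigned matrix and then run a convexity argument: the triangle inequality for the norm plus the observation that each pure product term $\proj{\alpha}\otimes\proj{\beta}$ has Schmidt-coefficient sum exactly $\|\,\proj{\alpha}\,\|_{\mathrm{HS}}\,\|\,\proj{\beta}\,\|_{\mathrm{HS}}=1$. This is essentially the paper's proof of Theorem~\ref{theoreal2} (where the convexity of the trace norm is invoked on the decomposition $\rho=\sum_i p_i\,\rho_i^A\otimes\rho_i^B$, using $\tr\rho_A^2,\tr\rho_B^2\le 1$), combined with the subsequently established identity $\parallel R(\rho)\parallel_{tr}=\sum_i\lambda_i$. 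The paper's dedicated proof of Theorem~\ref{theoreal1} (Appendix~\ref{Appendix0}) is instead a witness construction: it defines $\mathcal{W}=\mathds{1}-\sum_i A_i\otimes B_i$, shows $\tr(\rho_{sep}\mathcal{W})\ge 0$ on product states via the Cauchy--Schwarz inequality and the expansion of $\proj{a}$ in the orthonormal operator basis $\{A_i\}$, and then computes $\tr(\rho\,\mathcal{W})=1-\sum_i\lambda_i$. The witness route stays entirely within the operator Schmidt framework and produces an explicit observable whose expectation value certifies entanglement; your route is shorter once the norm identification is granted, but it does hinge on the step you yourself flag --- that $\sum_i\lambda_i$ is well defined (independent of the non-unique decomposition) because the $\lambda_i$ are the singular values of $R(\rho)$. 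That identification is indeed proved in Sec.~II of the paper, so your argument closes, but as written it front-loads the realignment machinery that the paper's appendix proof deliberately avoids.
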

The proof is given in Appendix \ref{Appendix0} for completeness.

\medskip
There exists an alternative formulation of the computable cross norm criterion which, as we will see,  turns out to be more convenient when considering continuous-variable states. This reformulation is done by defining a linear map $R$ called \textit{realignment map}, whose action on the tensor product of matrices $A=\sum_{ij} a_{ij}\ketbra{i}{j}$ and $B=\sum_{kl} b_{kl}\ketbra{k}{l}$ is
\begin{equation}\label{realignmentmap}
R\big(A\otimes B\big)
=\sum_{ijkl} a_{ij}b_{kl}\ket{i}\ket{j}\bra{k}\bra{l}.
\end{equation}
Since, any bipartite state $\rho$ can be decomposed into  $A\otimes B$ products according to Eq.~\eqref{eq-Schmidt}, one can easily express its realignment $R(\rho)$ based on definition \eqref{realignmentmap}.
Thus, the realignment map simply interchanges the bra-vector $\bra{j}$ of the first subsystem with the ket-vector $\ket{k}$ of the second subsystem.
Note that the map $R$ is basis-dependent, namely, it depends on the basis in which the matrix elements $a_{ij}$ and $b_{kl}$  are expressed.
When applying $R$ to continuous-variable states in Secs. \ref{sect-realignement-continuous-variable}, \ref{sectionstep1} and \ref{Section:Examples}, we will always assume that $\ket{i}$, $\ket{j}$, $\ket{k}$, and $\ket{l}$ are Fock states, so that Eq.~\eqref{realignmentmap} must be understood in the Fock basis.

Using the state-operator correspondence implied by the Choi-Jamiolkowski isomorphism \cite{Choi,Jamio}, we can identify matrices with vectors living in the tensor-product ket space, namely $\ket{A}{=}\sum_{ij} a_{ij}\ket{i}\ket{j}$ and $\ket{B}{=}\sum_{kl} b_{kl}\ket{k}\ket{l}$. Their corresponding dual vectors are noted $\bra{A}=\sum_{ij} a^*_{ij}\bra{i}\bra{j}$ and $\bra{B}=\sum_{kl} b^*_{kl}\bra{k}\bra{l}$, living in the tensor-product bra space. Hence, the above map can be reexpressed as
\begin{equation}\label{realignmentmap2}
R\big(\,A\otimes B\,\big)= \ket{A} \bra{B^*} ,
\end{equation}
where complex conjugation is also applied in the preferred basis. 
Using the fact that\footnote{$( A\otimes\mathds{1})\ket{\Omega}=\sum_{ij}a_{ij}(\ket{i}\bra{j}\otimes\mathds{1})\sum_k\ket{k}\ket{k}=\sum_{ij}a_{ij}\ket{i}\ket{j}$}
\begin{eqnarray}
\ket{A}&=&\sum_{ij}a_{ij}\ket{i}\ket{j}=(A\otimes\mathds{1})\ket{\Omega} \nonumber\\
\text{and}\qquad\qquad&&\nonumber\\
 \bra{B^*}&=&\sum_{ij}b_{ij}\bra{i}\bra{j}=\bra{\Omega}( B^T\otimes\mathds{1}),
\end{eqnarray} 
where $\ket{\Omega}=\sum_{i}\ket{i}\ket{i}$ is the (unnormalized\footnote{ This definition of $\ket{\Omega}$ remains useful even for continuous-variable (infinite-dimensional) systems, where it can be interpreted as a (unnormalized) two-mode squeezed vacuum state with infinite squeezing.
The definition of $R$ given by Eq. \eqref{RmapwithEPR} remains thus valid with  $\ket{\Omega}=\sum_{i=0}^\infty\ket{ii}$, where $\ket{i}$ stand for Fock states.}) maximally entangled state and $\mathds{1}=\sum_i \ketbra{i}{i}$ is the identity matrix, one can also rewrite the realignment map as 
\begin{eqnarray}
R(A \otimes B) 
&=& ( A \otimes \mathds{1}) \ketbra{\Omega}{\Omega} ( B^T \otimes \mathds{1})\nonumber\\
&=&( A \otimes \mathds{1}) \ketbra{\Omega}{\Omega} ( \mathds{1} \otimes B).
\label{RmapwithEPR}
\end{eqnarray}
which will happen to be useful when considering the optical realization of the separability criterion.

It is obvious that $R(R(\rho))=\rho$, so that definition \eqref{realignmentmap2} can also be restated as
\begin{equation}
R\big(\, \ket{A} \bra{B} \,\big)= A\otimes B^* .
\end{equation}
Note the special cases
\begin{eqnarray}
 R(\mathds{1}\otimes \mathds{1}) &=& \ketbra{\Omega}{\Omega}   , \nonumber \\
 R(\ket{\Omega}\bra{\Omega}) &=&  \mathds{1}\otimes\mathds{1}  ,
 \label{eq-special-cases-of-R}
\end{eqnarray}
which are trivial consequences of $\ket{\mathds{1}}=\ket{\Omega}$ and $ \hat \Omega=\mathds{1}$.

It will also be useful in the following to define the \textit{dual realignment map} $R^{\dagger}$, which is such that $\tr(\rho_1\,R(\rho_2))=\tr(R^\dag(\rho_1)\,\rho_2)$. Definitions \eqref{realignmentmap2} and \eqref{RmapwithEPR} translate into
\begin{eqnarray}
R^\dag(A \otimes B) &=& \ket{B^T} \bra{A^\dag} , \nonumber \\
&=& ( B^T \otimes \mathds{1}) \ketbra{\Omega}{\Omega} ( A \otimes \mathds{1})  ,  \nonumber\\
&=&( \mathds{1} \otimes B ) \ketbra{\Omega}{\Omega} ( A \otimes \mathds{1}) .
\end{eqnarray}

Coming back to the question of separability, let us now state the following theorem.
\begin{theorem}[\textbf{Realignment criterion} \cite{Chen}]
	\label{theoreal2}
If the bipartite state $\rho$ is separable, then $\parallel R(\rho) \parallel_{tr}\leq 1$. Conversely, if $\parallel R(\rho) \parallel_{tr}\, > 1$, then $\rho$ is entangled.
\end{theorem}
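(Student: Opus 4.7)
The plan is to reduce Theorem 2 to the already-established computable cross norm criterion (Theorem 1) by showing that the trace norm $\|R(\rho)\|_{\text{tr}}$ coincides exactly with the sum of operator Schmidt coefficients $\sum_i \lambda_i$. Once that equality is in hand, Theorem 1 delivers the result immediately: separability forces $\sum_i \lambda_i \leq 1$, hence $\|R(\rho)\|_{\text{tr}} \leq 1$.

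The key computation is to exhibit the singular value decomposition of $R(\rho)$. Start from the operator Schmidt decomposition $\rho = \sum_{i=1}^{r} \lambda_i\, A_i \otimes B_i$ and apply the realignment map in the form \eqref{realignmentmap2}, obtaining
\begin{equation}
R(\rho)=\sum_{i=1}^r \lambda_i \,\ket{A_i}\bra{B_i^*}.
\end{equation}
I then need to check that $\{\ket{A_i}\}_{i=1}^r$ and $\{\ket{B_i^*}\}_{i=1}^r$ are orthonormal families of vectors in the tensor-product ket space. This follows from the Choi--Jamiolkowski correspondence $X \mapsto \ket{X}=(X\otimes\mathds{1})\ket{\Omega}$, which sends the Hilbert--Schmidt inner product on operators to the ordinary inner product on vectors: $\braket{A_i}{A_j}=\tr(A_i^\dag A_j)=\delta_{ij}$, and likewise $\braket{B_i^*}{B_j^*}=\tr((B_i^*)^\dag B_j^*)=(\tr(B_i^\dag B_j))^*=\delta_{ij}$. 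Thus the expression above is literally a singular value decomposition of $R(\rho)$ with singular values $\lambda_i\geq 0$, and therefore
\begin{equation}
\|R(\rho)\|_{\text{tr}}=\sum_{i=1}^{r}\lambda_i.
\end{equation}

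Combining this identity with Theorem \ref{theoreal1} immediately gives the statement: separability of $\rho$ implies $\sum_i \lambda_i \leq 1$, hence $\|R(\rho)\|_{\text{tr}} \leq 1$; contrapositively, $\|R(\rho)\|_{\text{tr}} > 1$ implies entanglement.

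The only delicate point is the orthonormality of the conjugated family $\{\ket{B_i^*}\}$, together with the fact that the $\lambda_i$'s are genuinely non-negative so that the decomposition $\sum_i \lambda_i \ket{A_i}\bra{B_i^*}$ qualifies as an SVD rather than merely a rank-$r$ expansion; both are handled by the preferred-basis conventions under which $R$ and complex conjugation are defined, and by the definition of the Schmidt coefficients as non-negative reals. Everything else is routine bookkeeping once the SVD structure is recognized.
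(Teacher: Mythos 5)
Your proof is correct, but it takes a different route from the one the paper designates as the proof of Theorem~\ref{theoreal2}. The paper argues directly: for a product state, $R(\rho_A\otimes\rho_B)=\ketbra{\rho_A}{\rho_B^*}$ is a rank-one operator whose trace norm is $\parallel\rho_A\parallel_2\,\parallel\rho_B^*\parallel_2\le 1$ (via the Hilbert--Schmidt inner product), and then convexity of the trace norm extends the bound to all separable states. You instead establish the identity $\parallel R(\rho)\parallel_{tr}=\sum_i\lambda_i$ by recognizing $\sum_i\lambda_i\ket{A_i}\bra{B_i^*}$ as a singular value decomposition, and then invoke Theorem~\ref{theoreal1}. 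Your argument is sound --- the orthonormality of $\{\ket{A_i}\}$ and $\{\ket{B_i^*}\}$ and the non-negativity of the $\lambda_i$ are exactly the points that need checking, and you check them correctly (conjugation in the preferred basis preserves orthonormality since $\braket{B_i^*}{B_j^*}=(\tr(B_i^\dag B_j))^*$). The trade-off is that your route is not self-contained: it leans on Theorem~\ref{theoreal1}, whose proof (given in Appendix~\ref{Appendix0}) is a separate entanglement-witness argument, whereas the paper's direct proof needs only convexity. What your route buys is strictly more information: the equality $\parallel R(\rho)\parallel_{tr}=\sum_i\lambda_i$ shows the two criteria are literally equivalent, not merely that one implies the other --- which is precisely the observation the paper itself makes in the paragraph immediately following its proof, with the same computation you perform.
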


\begin{proof}
From Eq.~\eqref{realignmentmap2}, the realignment of a product state is given by
\begin{equation}
R(\rho_A\otimes\rho_B)=\ketbra{\rho_A}{\rho_B^*},
\end{equation}
and therefore,
\begin{equation}
\parallel R(\rho_A\otimes\rho_B) \parallel_{tr}=\tr\sqrt{\ketbra{\rho_A}{\rho_B^*}\rho_B^*\rangle\bra{\rho_A}}\leq1,
\end{equation}
where $\parallel\mathcal{O}\parallel_{tr}=\tr(\sqrt{\mathcal{O}\mathcal{O}^\dag})$ denotes the trace norm\footnote{The trace norm of $\mathcal{O}$ is equivalent to the sum of the singular values of $\mathcal{O}$, which are given by the square roots of the eigenvalues of $\mathcal{O}\mathcal{O}^\dag$. For an Hermitian operator, the trace norm is simply equal to the sum of the absolute values of the eigenvalues.} of an operator $\mathcal{O}$ and the inequality is found 
using the Hilbert-Schmidt inner product, $\bra{A}B\rangle=\tr(A^\dag B)$. The convexity of the trace norm implies that $\parallel R(\rho) \parallel_{tr}\leq 1$, for any separable state $\rho=\sum_i p_i\rho^A_i\otimes \rho^B_i$, with $p_i\ge 0$ and $\sum_i p_i=1$. 
\end{proof}

Theorem \ref{theoreal2} is called the realignment criterion as the detection of entanglement exploits the map $R$. But it is interesting to note that $\parallel R(\rho) \parallel_{tr}$ coincides with the sum of the Schmidt coefficients of $\rho$, so the realignment criterion is actually equivalent to Theorem 1 \cite{zhangzhang,johnston}.
Indeed, let $\rho$ be a state with the operator Schmidt decomposition $\rho = \sum_i^{r} \lambda_i \, A_i \otimes B_i$. Then, according to Eq.~(\ref{realignmentmap2}),
\begin{equation}
R(\rho)=\sum_i \lambda_i \, R(A_i\otimes B_i)=\sum_i \lambda_i \, \ket{A_i}\bra{B_i^*}
\end{equation}
and
\begin{eqnarray}
\parallel R(\rho)\parallel_{tr}&=&\tr\left[\sqrt{\sum_{i,j}\lambda_i\lambda_j\ket{A_i}\mean{B_i^*|B_j^*}\bra{A_j}}\right]\nonumber\\
&=&\tr\left[\sqrt{\sum_{i}\lambda_i^2\ket{A_i}\bra{A_i}}\right]\nonumber\\
&=&\tr\left[\sum_{i}|\lambda_i|\ket{A_i}\bra{A_i}\right]=\sum_i\lambda_i
\end{eqnarray}
since  $\mean{A_i|A_j}=\mean{B_i|B_j}=\delta_{ij}$. Theorem~\ref{theoreal2} is thus equivalent to Theorem~\ref{theoreal1}.

As a trivial example of Theorem~\ref{theoreal2}, let us consider two $d$-dimensional systems (with $d\ge 2$). The maximally mixed state $\rho=\mathds{1} \otimes\mathds{1}/d^2$ is mapped to $R(\rho)= \ketbra{\Omega}{\Omega}/d^2$, see Eq. \eqref{eq-special-cases-of-R}, so its trace norm is $\parallel~R(\rho)\parallel_{tr}=1/d < 1$ as expected since $\rho$ is separable. Conversely, according to Eq. \eqref{eq-special-cases-of-R}, the maximally entangled state $\rho=\ketbra{\Omega}{\Omega}/d$ is mapped to $R(\rho)=\mathds{1} \otimes \mathds{1}/d$, so that  $\parallel~R(\rho)\parallel_{tr}=d>1$ and the entanglement of $\rho$ is well detected in this case.

Finally, it is worth adding that, by inspection, definition~\eqref{realignmentmap} of the realignment map can be decomposed as
\begin{equation}
R\big(A\otimes B\big)=\Big(\big(A\otimes B^T\big)\,F\Big)^{T_2}
\end{equation}
where $(\cdot)^{T_2}$ denotes a partial transposition on the second subsystem ($B$), and
$F = \sum_{i,j} \ket{ij}\bra{ji} =\ketbra{\Omega}{\Omega}^{T_{2}}$ is the exchange operator \cite{WolfThesis}. 
 From this, we obtain the following.
\begin{rem}\label{RandF}
	For any state $\rho$, the realignment map can be defined as 
	\begin{equation}
	\label{ReF}
	R(\rho) = \left(  \rho^{T_2} F \right)^{T_2}=(\rho F)^{T_2}F.
	\end{equation}
\end{rem}
In other words, the map $R$ boils down to the concatenation of partial transposition on subsystem $B$, then applying the exchange operator $F$ to the right, followed by partial transposition on subsystem $B$ again. Conversely, the roles of $F$ and $(\cdot)^{T_2}$ can be exchanged. This alternative definition of $R$ allows us to express the trace norm as
\begin{equation}
\label{RePPT}
\parallel R(\rho)\parallel_{tr}
=\parallel(\rho F)^{T_2}F\parallel_{tr}=\parallel(\rho F)^{T_2}\parallel_{tr},
\end{equation}
where the last equality comes from the fact that, for any operator $A$, we have 
\begin{eqnarray}
\parallel A F\parallel_{tr}&=&\tr\sqrt{AF(AF)^\dag}=\tr\sqrt{AFF^\dag A^\dag}\nonumber\\
&=&\tr\sqrt{AA^\dag}=\parallel A\parallel_{tr}
\label{eq-identity-F}
\end{eqnarray}
since  $FF^\dag=FF=\mathds{1}$.  From Eq. \eqref{RePPT}, it becomes obvious that for the special case of states $\rho_{s}$ belonging to the symmetric subspace, i.e.,  states satisfying $F\rho_{s}=\rho_{s} F=\rho_{s}$, the realignment criterion coincides with the PPT criterion \cite{toth}. Indeed,
$
\parallel R(\rho_{s})\parallel_{tr} = \parallel\rho_{s}^{T_2}\parallel_{tr}
$
and $\parallel\rho_{s}^{T_2}\parallel_{tr}=\sum_i |\lambda'_i |>1$ implies that at least one eigenvalue $\lambda'_i$ of the partial-transposed state $\rho_{s}^{T_2}$ is negative, since $\tr (\rho_{s})=\tr (\rho_{s}^{T_2})=\sum_i \lambda'_i =1$ (which is the PPT criterion). Beyond the case of states in the symmetric subspace, however, the realignment and PPT criteria are generally incomparable criteria.

Of course, the dual realignment map $R^{\dagger}$ can also be defined similarly as in Eq.~\eqref{ReF}, namely
	\begin{equation}\label{realignmentdualSWAP}
R^\dag(\rho) =  \left( F \rho^{T_2} \right)^{T_2} = F \left( F \rho \right)^{T_2}.
	\end{equation}
The difference with the (primal) realignment map $R$ is that the exchange operator $F$ is applied to the left. To be complete, let us mention that maps $R$ and  $R^{\dagger}$ can also be defined using partial transposition on the first subsystem denoted as $(\cdot)^{T_1}$, namely,
	\begin{eqnarray}\label{realignment-map-T1}
	R(\rho) &=& (F\rho^{T_1})^{T_1}=F(F\rho)^{T_1} ,  \nonumber \\
R^\dag(\rho) &=& \left(  \rho^{T_1} F \right)^{T_1} = \left( \rho F \right)^{T_1} F.
	\end{eqnarray}

\section{Weak realignment criterion}
Let us now introduce the weak realignment criterion, which is in general not as strong as the original realignment criterion but has the advantage of being easily computable and physically implementable using standard optical components. The weak realignment criterion applies to all states but our main focus in this paper will be its application to continuous-variable states, see Sec. \ref{sect-realignement-continuous-variable}. We will in particular provide some explicit calculations in the case of  $n\times n$ mode Gaussian states, in which case it boils down to computing a simple quantity that only depends on the covariance matrix of the state. As expected, however, the easiness of computation comes with the price of a lower entanglement detection sensitivity than the one of the original realignment criterion for Gaussian states as calculated in \cite{Zhang}. As a way to overcome this problem, we show below that for a special type of states that we call \textit{Schmidt-symmetric}, the weak form and original form of the realignment criterion become equivalent. This suggests the use of a filtration procedure for augmenting the detection sensitivity. As explored in Sec. \ref{sectionstep1}, we may ``symmetrize'' the state by locally applying a noiseless amplifier or attenuator (it does not affect the separability of the state, so we may apply the weak realignment criterion on the filtered state).

\subsection{Formulation}

 It is well known that the trace norm of an operator is greater than or equal to its trace (and we have equality if and only if the operator is positive semidefinite). Using Eq. \eqref{ReF},  we have that for any state
\begin{equation}
\begin{aligned}
\parallel R(\rho) \parallel_{tr} & \geq \tr \, R(\rho)  \\
&  = \tr  \left(  \rho^{T_2} F\right) \\
&  = \tr \left( \rho \, F^{T_2}  \right) \\
&  = \tr \left( \rho\, \ketbra{\Omega}{\Omega}  \right) =\bra{\Omega}\rho\ket{\Omega}
\end{aligned}
\end{equation}
where we have used the invariance of the trace under partial transposition $(\cdot)^{T_2}$ (line 2), the identity $\tr(A \, B^{T_2})=\tr(A^{T_2} B)$ for any bipartite operators $A$ and $B$ (line 3), and the definition of $F$ (line~4). 
Note that this result can also  be obtained by noticing that 
\begin{equation}
\tr(R(\rho)\, \mathds{1}\otimes \mathds{1})=\tr(\rho \, R^\dag(\mathds{1}\otimes \mathds{1}))=\tr(\rho \, \ketbra{\Omega}{\Omega}).
\end{equation}
We can thus state the following theorem:
\begin{theorem}[\textbf{Weak realignment criterion}]
	\label{weak1}
	For any bipartite state $\rho$, the trace norm of the realigned state can be lower bounded as
	\begin{equation}
	\parallel R(\rho) \parallel_{tr}\geq \tr \, R(\rho)=\mean{\Omega|\rho|\Omega}  .
	\label{eq-max-entanglement-component}
	\end{equation}
	Hence, if  $\rho$ is separable, then $\mean{\Omega|\rho|\Omega} \le 1$. Conversely,	
	if $\mean{\Omega|\rho|\Omega}> 1$, then $\rho$ is entangled. 
\end{theorem}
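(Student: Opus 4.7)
The plan is to assemble the theorem from three pieces, two of which are essentially already laid out in the preceding paragraphs. The first piece is the generic fact that $\parallel A\parallel_{tr}\geq \tr A$ for any operator with real trace (for Hermitian $A$ this is immediate from $\sum_i|\lambda_i|\geq\sum_i\lambda_i$, and $R(\rho)$ has real trace since $\tr R(\rho)$ will turn out to equal a nonnegative real number). The second piece is to evaluate $\tr R(\rho)$ in closed form using Remark~\ref{RandF}. The third piece is to invoke Theorem~\ref{theoreal2} to chain everything into a separability test.

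For the evaluation of $\tr R(\rho)$, I would start from the identity $R(\rho)=(\rho^{T_2}F)^{T_2}$ given in Eq.~\eqref{ReF}. Since the trace is invariant under partial transposition, $\tr R(\rho)=\tr(\rho^{T_2}F)$. Next, using the general fact $\tr(XY^{T_2})=\tr(X^{T_2}Y)$ for any bipartite operators $X,Y$, this rewrites as $\tr(\rho\,F^{T_2})$. A direct computation shows $F^{T_2}=\sum_{ij}\ketbra{ii}{jj}=\ketbra{\Omega}{\Omega}$, so $\tr R(\rho)=\tr(\rho\,\ketbra{\Omega}{\Omega})=\mean{\Omega|\rho|\Omega}$. (This is exactly the manipulation already carried out in the lines preceding the theorem statement, and it can equivalently be obtained from the dual identity $\tr(R(\rho)\,\mathds{1}\otimes\mathds{1})=\tr(\rho\,R^\dag(\mathds{1}\otimes\mathds{1}))$ together with $R^\dag(\mathds{1}\otimes\mathds{1})=\ketbra{\Omega}{\Omega}$, which follows from Eq.~\eqref{eq-special-cases-of-R}.) Combining with $\parallel R(\rho)\parallel_{tr}\geq \tr R(\rho)$ yields the inequality \eqref{eq-max-entanglement-component}.

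To conclude the separability statement, I would simply chain the new inequality with Theorem~\ref{theoreal2}: if $\rho$ is separable, then $\parallel R(\rho)\parallel_{tr}\leq 1$, and since $\mean{\Omega|\rho|\Omega}\leq \parallel R(\rho)\parallel_{tr}$, we get $\mean{\Omega|\rho|\Omega}\leq 1$. The second assertion is just the contrapositive.

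I do not expect any real obstacle: all the heavy lifting is hidden inside Theorem~\ref{theoreal2}, which is assumed. The only point that requires a moment of care is justifying $\parallel A\parallel_{tr}\geq \tr A$ for the (generally non-Hermitian) operator $A=R(\rho)$; but the computation above shows $\tr R(\rho)=\mean{\Omega|\rho|\Omega}\geq 0$, and for any operator $\parallel A\parallel_{tr}\geq|\tr A|$, so the inequality is automatic. Everything else is bookkeeping with partial transposition and the definition of $F$.
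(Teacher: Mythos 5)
Your proposal is correct and follows essentially the same route as the paper: the lower bound $\parallel R(\rho)\parallel_{tr}\geq \tr R(\rho)$, the evaluation $\tr R(\rho)=\tr(\rho^{T_2}F)=\tr(\rho\,F^{T_2})=\mean{\Omega|\rho|\Omega}$ via Eq.~\eqref{ReF}, and the chaining with Theorem~\ref{theoreal2} are exactly the steps in the text preceding the theorem, including the alternative derivation through the dual map. Your extra care in justifying $\parallel A\parallel_{tr}\geq|\tr A|$ for a possibly non-Hermitian $R(\rho)$ is a small but welcome refinement of the paper's blanket claim.
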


In other words, the weak realignment criterion amounts to computing the fidelity of state $\rho$ with respect to $\ket{\Omega}$.
It is immediate that its entanglement detection capability can only be lower than that of the original realignment criterion, Theorem~\ref{theoreal2}.
Furthermore, if we deal with bound-entangled states, the weak realignment criterion cannot detect entanglement. Indeed, we can link the weak realignment criterion with the PPT criterion by expressing
	\begin{equation}\label{reT2PPT}
	\parallel R(\rho)^{T_2}\parallel_{tr} 
	= \parallel \rho^{T_2} F\parallel_{tr} = \parallel \rho^{T_2}\parallel_{tr}  ,
	\end{equation}
	where we have used Eqs. \eqref{ReF} and \eqref{eq-identity-F}, combined with the inequality
		\begin{equation}
	 \parallel  R(\rho)^{T_2}\parallel_{tr} 
	\geq \tr \left( R(\rho)^{T_2} \right) = \tr \, R(\rho) , 
	\end{equation}
	Thus,
		\begin{equation}
	\parallel  \rho^{T_2}\parallel_{tr} \, 
	\geq \tr \, R(\rho) , 
	\end{equation}
	and we deduce that the weak realignment criterion is weaker than the PPT criterion. If a state is bound entangled, we have $	\parallel  \rho^{T_2}\parallel_{tr} =1$ which then implies that $\tr \, R(\rho)\leq 1$, so its entanglement cannot  be detected with the weak realignment criterion. 


It is instructive to apply the weak realignment criterion on each component of the operator Schmidt decomposition of $\rho$. Using Eq. \eqref{RmapwithEPR}, we have
\begin{equation}
\tr \, R(A \otimes B) = \bra{\Omega} A \otimes B \ket{\Omega} = \tr (A B^T) , 
\end{equation}
which implies that if $\rho =  \sum_i \lambda_i \, A_i \otimes B_i$, then
\begin{equation}
\tr \, R(\rho) = \sum_i \lambda_i \, \tr (A_i B_i^T) = \sum_i \lambda_i \, \langle B_i^* |A_i  \rangle
\end{equation}
Remembering that $\parallel R(\rho) \parallel_{tr} = \sum_i \lambda_i $, it appears that we must have $B_i=A_i^*$ in order to reach a situation where $\tr \, R(\rho) =\parallel R(\rho) \parallel_{tr} $. This is analyzed now.

\subsection{Schmidt-symmetric states}
\label{subsec-Schmidt-symmetric}

We  now show that for \textit{Schmidt-symmetric} states, the weak and original forms of the realignment criterion become equivalent (while the weak form is much simpler to compute). Let us define Schmidt-symmetric states $\rho_{sch}$ as the states that admit an operator Schmidt decomposition with $B_i=A_i^*$, $\forall i$, namely,
\begin{equation}
\rho_{sch}=\sum_{i}\lambda_i  \, A_i\otimes A_i^*.
\end{equation}
These states satisfy $F\rho_{sch}\, F=\rho_{sch}^*$ since applying $F$ on both sides is equivalent to exchanging the two subsystems and since the Schmidt coefficients are real. Note that the converse is not true as there exist states $\rho$ that satisfy $F\rho F=\rho^*$ but are not Schmidt-symmetric, for example the state $\rho = \sum_{i}\lambda_i  \, A_i\otimes (- A_i^*)$.
For any state $\rho$ that satisfies $F\rho F=\rho^*$, it is easy to see that  $R(\rho)$ is Hermitian since\footnote{Be aware that $R(\rho)^\dag$ is distinct from the dual map $R^\dag(\rho)$.}
\begin{eqnarray}
R(\rho)^\dag&=&((\rho F)^{T_2}F)^\dag \nonumber \\
&=&F(\rho^* F)^{T_1}  \nonumber \\
&=&F(F\rho)^{T_1}=R(\rho)  ,
\end{eqnarray}
where we have used Eqs. \eqref{ReF} and \eqref{realignment-map-T1}.
Thus, $R(\rho_{sch})$ is necessarily an Hermitian operator.

Actually, using the definition (\ref{realignmentmap2}) of the realignment map $R$, it appears that $R(\rho_{sch})=\sum_i\lambda_i\ket{A_i}\bra{A_i}$ is  positive semidefinite, so that  $ \parallel R(\rho_{sch})\parallel_{tr}=\tr \, R(\rho_{sch})$.
Conversely, if the latter equality is satisfied for a state $\rho$, it means that $R(\rho)$ is  positive semidefinite so it can be written as $R(\rho)=\sum_i\lambda_i\ket{A_i}\bra{A_i}$, which is nothing else but the realignment of a Schmidt-symmetric state. We have thus proven the following theorem:
\begin{theorem}[\textbf{Schmidt-symmetric states}] \label{schmidt}
A bipartite state $\rho$ is Schmidt-symmetric (i.e., it admits the operator Schmidt decomposition $\rho = \sum_{i}\lambda_i  \, A_i\otimes A_i^*$)  if and only if
\begin{equation}
 \parallel R(\rho_{})\parallel_{tr}=\tr \, R(\rho_{}).
 \end{equation}
\end{theorem}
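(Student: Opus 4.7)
The plan is to prove each direction by working directly with the representation $R(\rho_{sch}) = \sum_i \lambda_i \ket{A_i}\bra{A_i}$ that the realignment map produces on Schmidt-symmetric states, and exploiting the basic fact that an operator has trace equal to trace norm if and only if it is positive semidefinite.

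For the forward direction, I would substitute the Schmidt-symmetric decomposition $\rho_{sch} = \sum_i \lambda_i A_i \otimes A_i^*$ into formula \eqref{realignmentmap2}, noting that complex conjugating $A_i^*$ gives $A_i$, so that
\begin{equation}
R(\rho_{sch}) = \sum_i \lambda_i \, \ket{A_i}\bra{A_i}.
\end{equation}
Since the $\lambda_i$ are nonnegative, this operator is a nonnegative combination of rank-one projectors and hence positive semidefinite. For any PSD operator one has $\|M\|_{tr}=\tr M$, which yields the claimed equality.

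For the converse, I would first argue that $\|R(\rho)\|_{tr} = \tr R(\rho)$ forces $R(\rho)$ to be positive semidefinite. The cleanest way is via the polar decomposition $R(\rho)=U|R(\rho)|$: then $\tr R(\rho) = \tr(|R(\rho)|^{1/2} U |R(\rho)|^{1/2})$, and the Cauchy--Schwarz inequality in the Hilbert--Schmidt inner product gives $|\tr R(\rho)| \le \tr|R(\rho)| = \|R(\rho)\|_{tr}$, with equality only when $U$ acts as the identity on the support of $|R(\rho)|$, i.e., only when $R(\rho)=|R(\rho)|\ge 0$. Once positivity is established, a spectral decomposition yields $R(\rho) = \sum_i \mu_i \ket{v_i}\bra{v_i}$ with $\mu_i \ge 0$ and $\braket{v_i}{v_j}=\delta_{ij}$. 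Via the Choi--Jamio\l kowski identification used in the paper, each $\ket{v_i}$ corresponds to an operator $V_i$ with $\ket{v_i}=\ket{V_i}$, and orthonormality of the vectors translates into Hilbert--Schmidt orthonormality of the $V_i$.

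Finally, I would apply $R$ once more and use $R\circ R = \mathrm{id}$ together with the identity $R(\ket{V_i}\bra{V_i}) = V_i \otimes V_i^*$ (already noted in the paper just after Eq.~\eqref{realignmentmap2}) to obtain
\begin{equation}
\rho = R(R(\rho)) = \sum_i \mu_i \, V_i \otimes V_i^*,
\end{equation}
which is manifestly a Schmidt-symmetric operator Schmidt decomposition, since $\{V_i^*\}$ is also orthonormal when $\{V_i\}$ is. The step I expect to require the most care is the positivity argument: one must make sure that the equality of a possibly complex trace with a nonnegative norm really does force $R(\rho)$ to be Hermitian and positive semidefinite, rather than merely normal with eigenvalues of matching phases; the polar decomposition argument above handles this cleanly and is the right tool to invoke.
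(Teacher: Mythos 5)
Your proposal is correct and follows essentially the same route as the paper: the forward direction via $R(\rho_{sch})=\sum_i\lambda_i\ket{A_i}\bra{A_i}\ge 0$, and the converse by noting that equality of trace and trace norm forces $R(\rho)$ to be positive semidefinite, whose spectral decomposition is then pulled back through $R$ to exhibit a Schmidt-symmetric decomposition. You merely supply details the paper leaves implicit (the polar-decomposition argument for the equality case, and the explicit use of $R\circ R=\mathrm{id}$), which is a sound elaboration rather than a different approach.
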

This entails  the coincidence between the weak form of the realignment criterion derived in Theorem \ref{weak1} and the original realignment criterion of Theorem \ref{theoreal2} in the special case of Schmidt-symmetric states.

Incidentally, we note that the (necessary) condition $F\rho F=\rho^*$ for a state to be Schmidt-symmetric resembles the (necessary and sufficient) condition $F\rho F=\rho$ for a state to be symmetric under the exchange of the two systems. 
For this reason, when building a filtration procedure in order to bring the initial state closer to a Schmidt-symmetric state (see Sec. \ref{sectionstep1}), we will ``symmetrize" the state.
More precisely, we will exploit the fact that the condition $F\rho F=\rho^*$ implies that $\tr_1 \, \rho = \tr_2 \, \rho^*$. In other words, Schmidt-symmetric states are such that the reduced states of both subsystems are complex conjugate of each other, namely 
$\rho_{sch,2}=\rho_{sch,1}^*$, which is also a simple consequence of
\begin{eqnarray}
\rho_{sch,1}&=&\tr_2(\rho_{sch})=\sum_i\lambda_i \, A_i \, \tr A_i^*,  \nonumber\\
\rho_{sch,2}&=&\tr_1(\rho_{sch})=\sum_i\lambda_i \, A_i^* \, \tr A_i
\end{eqnarray}
Hence, they have the same eigenspectrum since their eigenvalues are real, and in particular the same purity (but the converse is not true),
\begin{eqnarray}
\tr(\rho_{sch,1}^2) = \tr(\rho_{sch,2}^2) .
\label{eq-same-purity}
\end{eqnarray}
The filtration procedure that we  apply in Sec. \ref{sectionstep1} follows Eq. \eqref{eq-same-purity} in the sense that we will ``symmetrize'' the initial state so that the two subsystems reach the same purity.



\section{Weak realignment criterion for continuous-variable states}
\label{sect-realignement-continuous-variable}

\subsection{Preliminaries and symplectic formalism}

Now, we turn to the application of the weak realignment criterion for continuous-variable states (i.e., living in an infinite-dimensional Fock space).
We start by briefly introducing  the symplectic formalism employed for continuous-variable states. More details can be found, for example in \cite{weedbrook, Anaellethesis}.

A continuous-variable system is represented by $N$ modes, each of them associated with a Hilbert space spanned by the Fock basis and having its own mode operators $a_i$ and $a_i^\dag$ which verify the commutation relation $[a_i,a_i^\dag]=1$. We define  the quadratures vector $\mathbf{r}=(x_1,p_1,x_2,p_2,\cdots,x_N,p_N)$ where
\begin{equation}
x_i=\frac1{\sqrt{2}}(a_i+a_i^\dag),\quad p_i=-\frac1{\sqrt{2}}(a_i-a_i^\dag)\quad \forall i=1,\cdots,N.
\end{equation}

Each quantum state $\rho$ can be described by a quasiprobability distribution function, the Wigner function
\begin{equation}
W(\mathbf{x},\mathbf{p})=\frac{1}{(2\pi)^N}\int d\mathbf{y}e^{-i\mathbf{p}\cdot \mathbf{y}}\langle \mathbf{x}+\mathbf{y}/2|\rho|\mathbf{x}-\mathbf{y}/2\rangle
\end{equation}
which is normalized to one.

The first-order moments constitute the displacement vector, defined as $\mean{\mathbf{r}}=\tr (\mathbf{r}\rho)$, while the second moments make up the covariance matrix $\gamma$ whose elements are given by
\begin{equation}
\gamma_{ij}=\frac12\mean{\{r_i,r_j\}}-\mean{r_i}\mean{r_j}.
\end{equation}
where $\{\cdot,\cdot\}$ represents the anticommutator.

	A \textit{Gaussian} state is fully characterized by its displacement vector and covariance matrix and its Wigner function has a Gaussian shape. Some relevant examples of Gaussian states are the following:
	\begin{itemize}
		\item The coherent state $\ket{\alpha}$  is a displaced vacuum state (where $\alpha=0$), meaning that the covariance matrix is the one of the vacuum $\gamma_{\ket{\alpha}}=\gamma_{\ket{0}}=\frac12\begin{psmallmatrix}1&0\\0&1
		\end{psmallmatrix}$ but the first moment depends on the value of $\alpha$.
		\item The squeezed state $\ket{r}$: 
	the uncertainty of one quadrature is minimized by squeezing it according to the squeezing parameter $r$; the covariance matrix is given by $\gamma_{\ket{r}}=\frac12\begin{psmallmatrix}e^{-2r}&0\\0&e^{2r}
		\end{psmallmatrix}$.
		\item The thermal state $\rho_{th}$  is a mixed state where the uncertainties of each quadratures are equal, but not minimals; the covariance matrix is given by $\gamma_{th}=\frac12\begin{psmallmatrix}2\mean{n}+1&0\\0&2\mean{n}+1
		\end{psmallmatrix}$ where $\mean{n}$ is the mean photon number.
		\item The two-mode squeezed vacuum state $\ket{TMSV}$ is a two-mode state with covariance matrix
		\begin{equation}
\gamma_{TMSV}=\frac12\begin{pmatrix}\cosh 2r &0&\sinh 2r&0\\0&\cosh 2r&0&-\sinh 2r\\\sinh 2r &0&\cosh 2r&0\\0&-\sinh 2r&0&\cosh 2r
\end{pmatrix}. \label{gammaTMSV}
		\end{equation}
		If the squeezing $r$ tends to infinity, one recovers the well-known Einstein-Podolsky-Rosen (EPR) state.
	\end{itemize}

A Gaussian unitary transformation is a unitary transformation that preserves the Gaussian
character of a quantum state. In terms of quadrature
operators, a Gaussian unitary transformation is described by the map 
\begin{equation}
\mathbf{r}\to \mathcal{S}\mathbf{r}+\mathbf{d},
\end{equation}
where $\mathbf{d}$ is a real vector of dimension $2N$ and $\mathcal{S}$ is a real $2N\times2N$ matrix which is symplectic.

\subsection{Examples of realigned states}

 It is instructive first to check the action of the realignment map $R$ on some of the well-known states of quantum optics:
\begin{itemize}
	\item Fock states\footnote{Remember that the Fock basis $\{\ket{n}\}$ is used as the preferred basis with respect to which the realignment map $R$ is defined.}:\\ $R(\ket{n_1}\bra{n_2}\otimes \ketbra{n_3}{n_4})=\ketbra{n_1}{n_3}\otimes\ketbra{n_2}{n_4}$.
	\item Position states\footnote{This can be proven using Eq.~(\ref{RmapwithEPR}) and expressing $\ket{\Omega}=\sum_n\ket{n,n}$ in the position basis, namely $\ket{\Omega}=\int dx\, \ket{x,x}$}: \\ $R(\ket{x_1}\bra{x_2}\otimes \ketbra{x_3}{x_4})=\ketbra{x_1}{x_3}\otimes\ketbra{x_2}{x_4}$.
	\item Momentum states\footnote{This can be proven using Eq.~(\ref{RmapwithEPR}) and expressing $\ket{\Omega}=\sum_n\ket{n,n}$ in the momentum basis: $\ket{\Omega}=\int dp\, \ket{p,-p}$}:\\  $R(\ket{p_1}\bra{p_2}\otimes \ketbra{p_3}{p_4})=\ketbra{p_1}{-p_3}\otimes\ketbra{-p_2}{p_4}$.
	\item Coherent states:\\ $R(\ket{\alpha}\bra{\beta}\otimes \ketbra{\gamma}{\delta})=\ketbra{\alpha}{\gamma^*}\otimes\ketbra{\beta^*}{\delta}$. In particular, $R(\ket{\alpha}\bra{\alpha}\otimes \ketbra{\alpha^*}{\alpha^*})=\ketbra{\alpha}{\alpha}\otimes\ketbra{\alpha^*}{\alpha^*}$, so that a pair of phase-conjugate coherent states is invariant under $R$.
	\item Two-mode squeezed vacuum state:\\
	Defining $\ket{TMSV}=~(1-\tau^2)^{1/2} \sum_{i}\, \tau^{i} \, \ket{i}\ket{i}$ with $0\le \tau <1$ characterizing the squeezing, we obtain  $R(\ketbra{TMSV}{TMSV})=\frac{1+\tau}{1-\tau} \, \rho_{th}\otimes\rho_{th}$ where   $\rho_{th}=(1-\tau) \sum_i \, \tau^i\ketbra{i}{i}$ is a thermal state. Entanglement is detected in this case since $\parallel~R(\ket{TMSV}\bra{TMSV})\parallel_{tr}=\frac{1+\tau}{1-\tau}>1$ as soon as $\tau>0$.
	\item Tensor product of thermal states:\\ $R(\rho_{th}\otimes\rho_{th})=\frac{1-\tau}{1+\tau}\, \ketbra{TMSV}{TMSV}$ so that we have $\parallel R(\rho_{th}\otimes\rho_{th})\parallel_{tr}=\frac{1-\tau}{1+\tau} \le 1$, as expected for a separable state.
\end{itemize}

\subsection{Expression of $\tr (R)$ for arbitrary states}

Let us now show how the weak form of the realignment criterion provides us with an implementable entanglement witness for all continuous-variable states. According to Eq. \eqref{eq-max-entanglement-component}, in order to access $\tr\,R(\rho)$ we need to project state $\rho$ onto $\ket{\Omega}$, which can be thought of as an unnormalized infinitely entangled two-mode vacuum squeezed state. As proven in Appendix \ref{AppendixA}, the latter can be reexpressed as 
\begin{equation}
 \ket{\Omega}=\sqrt{\pi} \, U_{BS}^\dagger \ket{0}_{x_1}\ket{0}_{p_2} ,
\end{equation}
that is, it can formally be obtained by applying (the reverse of) a 50:50 beam splitter Gaussian unitary $U_{BS}$ on an input state of the product form $\ket{0}_{x_1}\ket{0}_{p_1}$, where $U_{BS}\ket{z}_{x_1} \ket{z'}_{x_2} =\left|(z-z')/\sqrt{2}\right\rangle_{x_1}  \left|(z+z')/\sqrt{2}\right\rangle_{x_2}$ in the position eigenbasis and $\ket{0}_{x_1}$ (resp. $\ket{0}_{p_2}$) is the position (momentum) eigenstate with zero eigenvalue. Therefore,
\begin{align}
\tr \, R(\rho) = \pi \, \bra{0}_{x_1} & \bra{0}_{p_2} \, U_{BS} \, \rho \, U_{BS}^\dag \, \ket{0}_{x_1}\ket{0}_{p_2}.
\label{moyenneEPR2modes}
\end{align}
Hence, implementing the weak realignment criterion amounts to expressing the probability of projecting the state $\rho'=U_{BS}\, \rho \, U_{BS}^\dag$ onto $\ket{0}_{x_1}\ket{0}_{p_2} $ where $\rho'$ is the state obtained at the output of a 50:50 beam splitter (see Fig. \ref{prob} for the two-mode case). This yields an experimental way of constructing an entanglement witness using standard optical components since entanglement is detected simply by applying a Gaussian measurement on the state \cite{weedbrook,fabre}.

\begin{figure}
	\includegraphics[trim= 5.5cm 5cm 5cm 19cm ,clip,width=0.4\textwidth]{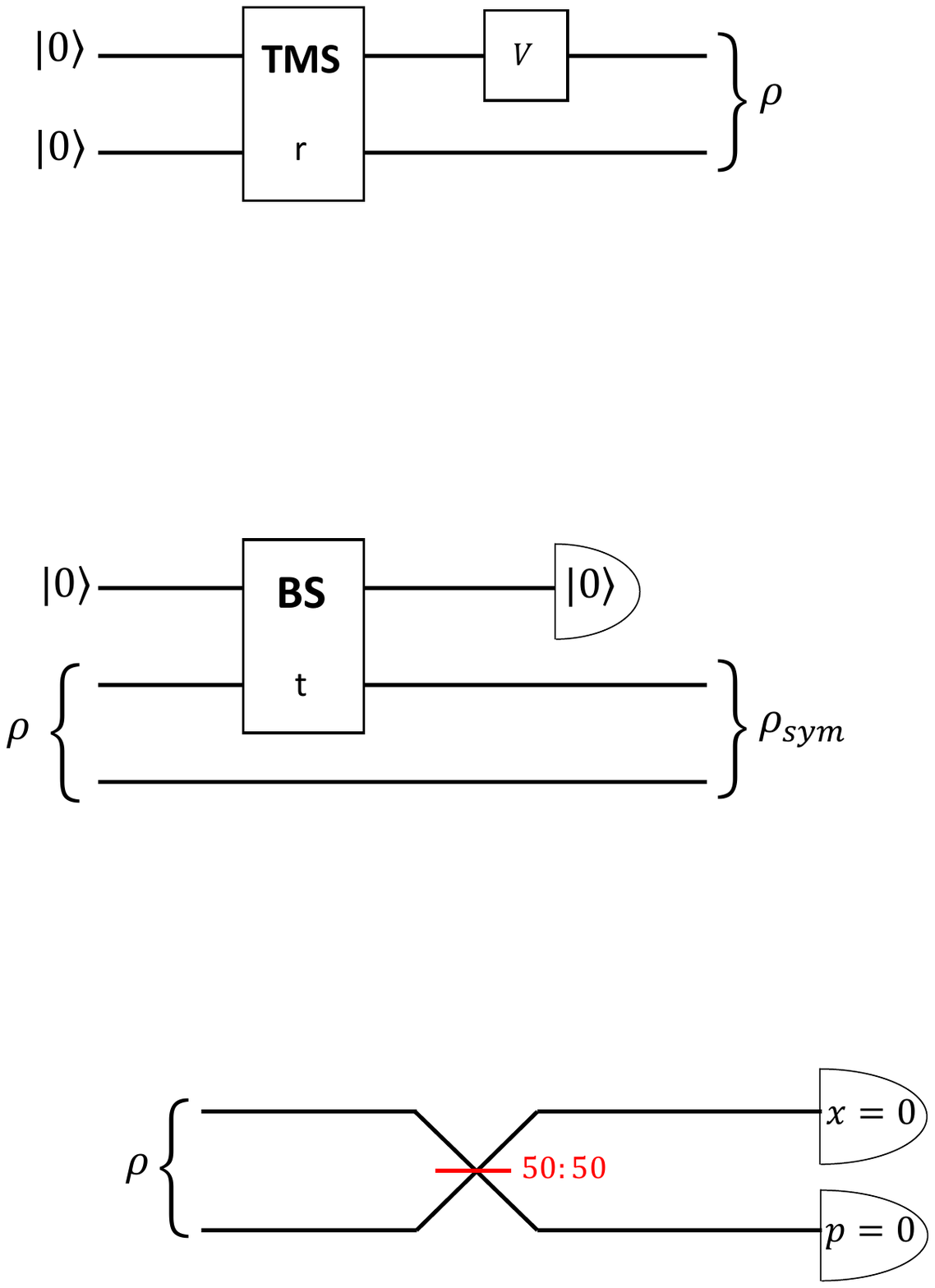} \caption{\label{prob} Weak realignment criterion for a (two-mode) state $\rho$. The trace of the realigned state $R(\rho)$ is obtained by computing the probability of measuring $x_1=p_2=0$ on the output state after processing $\rho$ through a $50:50$ beam splitter, see Eq. \eqref{moyenneEPR2modes}.}
\end{figure}

Furthermore, this entanglement witness can be generalized to $n\times n$ modes with quadrature components $\mathbf{x}_A=(x_1,\cdots,x_n)$, $\mathbf{p}_B=(p_{n+1},\cdots,p_{2n})$. We have\footnote{To be more precise, if the $n$ first modes belong to Alice and the $n$ last modes belong to Bob, we apply $n$ 50:50 beam splitters between Alice's $i$th mode and Bob's $i$th mode, for $i=1,...,n$. }  
 \begin{equation}
 \ket{\Omega_{n\times n}}=\pi^{n/2}\, U_{BS}^\dag \, \ket{ 0}_{\mathbf{x}_A} \ket{0 }_{\mathbf{p}_B}   ,
 \end{equation}
 with the short-hand notation $\ket{ 0}_{\mathbf{x}_A} \equiv \ket{0, \cdots, 0}_{\mathbf{x}_A}$ and $\ket{0 }_{\mathbf{p}_B} \equiv \ket{0, \cdots, 0}_{\mathbf{p}_B}$,
hence
\begin{equation}
\tr \, R(\rho) =
\pi^n \, \bra{0}_{\mathbf{x}_A}  \bra{0}_{\mathbf{p}_B} \, \rho' \, \ket{ 0}_{\mathbf{x}_A}\ket{0 }_{\mathbf{p}_B} .
\label{moyenneEPR}
\end{equation}



\subsection{Expression of $\tr(R)$ for Gaussian states}
\label{sec:gaussian}

If the initial state $\rho$ is  an $n\times n$ Gaussian state, the state $\rho' = U_{BS}\, \rho \, U_{BS}^\dag$ will be Gaussian too (since the beam splitter is a Gaussian unitary). Its Wigner function is thus given by
\begin{equation}
W_{\rho'}(\mathbf{r})=\frac{1}{(2\pi)^{2n}\sqrt{\det\gamma'}}e^{-\frac{1}{2}\mathbf{r}(\gamma')^{-1}\mathbf{r}^T}
\end{equation}
where $\mathbf{r}=(x_1,p_1,x_2,p_2,\cdots,x_{2n},p_{2n})$ and $\gamma'$ is the covariance matrix of $\rho'$ obtained as
\begin{equation}
	\gamma'=\mathcal{S}\gamma\mathcal{S^T}\qquad\text{with}\qquad\mathcal{S}=\frac{1}{\sqrt{2}}\left(
		\begin{array}{cccc}
	\mathds{1}_{2n} & -\mathds{1}_{2n} \\
		\mathds{1}_{2n} & \mathds{1}_{2n} \\
		\end{array}
		\right),
\end{equation}
being the symplectic matrix representing the beam splitting transformation and $\gamma$ being the covariance matrix of $\rho$. 
The probability of projecting $\rho'$ onto $\ket{ 0}_{\mathbf{x}_A}\ket{0 }_{\mathbf{p}_B}$ as of Eq.~(\ref{moyenneEPR}) is thus easy to compute.
Indeed,  the probability distribution of measuring $\mathbf{x}_A$ on the $n$ modes of the first system and $\mathbf{p}_B$ on the $n$ modes of the second system is given by\footnote{The probability distribution is Gaussian since we are dealing with Gaussian states.}
\begin{equation}
\label{probGauss}
P(\mathbf{x}_A,\mathbf{p}_B)
=\frac{1}{(2\pi)^n\sqrt{\det \gamma_{w}}} e^{-\frac{1}{2}\left(\begin{smallmatrix} \mathbf{x}_A,\mathbf{p}_B 	\end{smallmatrix}\right)\gamma_{w}^{-1}\left(\begin{smallmatrix}	 \mathbf{x}_A,\mathbf{p} _B	\end{smallmatrix}\right)^T}
\end{equation}
where $\gamma_{w}$ ("$w$" is for witness) is the restricted covariance matrix obtained by removing the lines and columns of the unmeasured quadratures of $\gamma'$ (see Fig.~\ref{matreduce} for examples with $n=1$ and $2$). Thus $\bra{\Omega}\rho\ket{\Omega}=\pi^n P(\mathbf{0,0})=\frac{1}{2^n\sqrt{\det\gamma_{w}}}$. In Appendix \ref{AppendixProb}, we show how $P(0,0)$ can also be computed directly in a two-mode case ($n=1$).

\begin{figure}
	\includegraphics[width=0.4\textwidth]{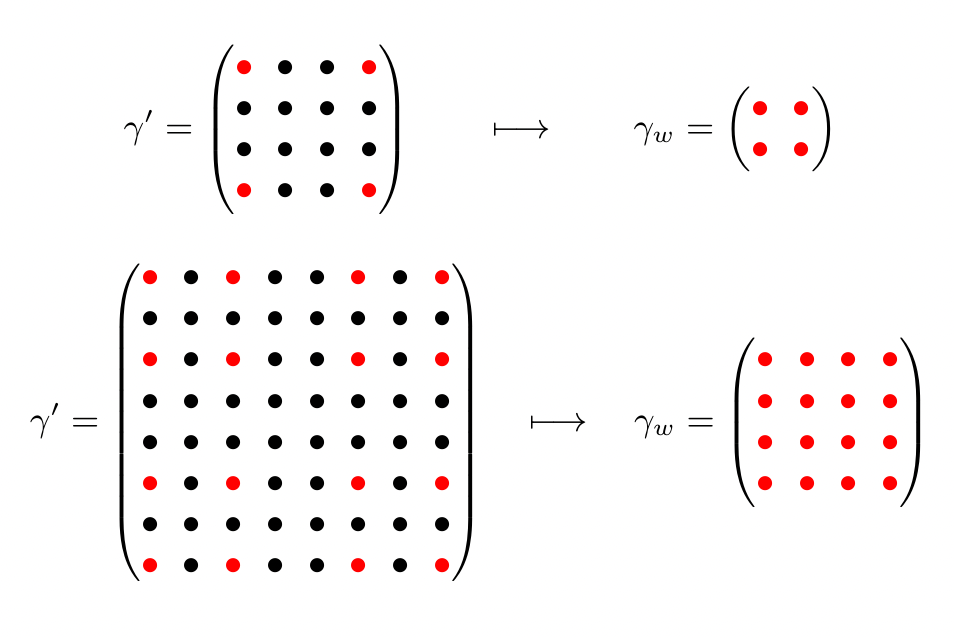}
	\caption{\label{matreduce}Construction of the restricted covariance matrix $\gamma_{w}$ from the covariance matrix $\gamma'$, when $n=1$ and $n=2$. The red bullets correspond to the entries of $\gamma'$ that are copied in $\gamma_{w}$ while the black bullets are the entries  that are dropped.}
\end{figure}

We are now ready to state the following theorem:
\begin{theorem}[\textbf{Weak realignment criterion for Gaussian states}]
	\label{weak}
For any $n\times n$ Gaussian state $\rho^G$, the trace norm of the realigned state can be lower bounded as
\begin{equation}
\parallel R(\rho^G) \parallel_{tr}\geq\tr \, R(\rho^G)=\frac{1}{2^n\sqrt{\det\gamma_{w}}} \label{formuleRealCriterion}.
\end{equation}
Hence, if $\rho$ is separable, then $ \frac{1}{2^n\sqrt{\det\gamma_{w}}} \le 1$.  Conversely, 
\begin{equation}
\text{if } \frac{1}{2^n\sqrt{\det\gamma_{w}}} > 1\text{,   then } \rho^G \text{ is entangled.} 
\label{conditionweak}
\end{equation}
\end{theorem}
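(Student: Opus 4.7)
The plan is to recognize that essentially all of the work has been assembled in the preceding subsections; the proof is a matter of chaining together those ingredients, and I would present it as a short derivation rather than a new argument.

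First, I would invoke Theorem \ref{weak1} to write $\tr R(\rho^G)=\langle\Omega|\rho^G|\Omega\rangle$, and then insert the beam-splitter representation of the (unnormalized) maximally entangled two-mode vacuum-squeezed vector lifted to $n\times n$ modes, $\ket{\Omega_{n\times n}}=\pi^{n/2}\,U_{BS}^\dagger\,\ket{0}_{\mathbf{x}_A}\ket{0}_{\mathbf{p}_B}$. Pushing $U_{BS}$ onto the state through cyclicity of the inner product turns the overlap into a joint projection onto the eigenstates $\mathbf{x}_A=\mathbf{0}$ and $\mathbf{p}_B=\mathbf{0}$ of the transformed state $\rho'=U_{BS}\rho^G U_{BS}^\dagger$, giving exactly Eq.~\eqref{moyenneEPR}.

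Next, since beam splitters are Gaussian unitaries, $\rho'$ remains Gaussian with covariance matrix $\gamma'=\mathcal{S}\gamma\mathcal{S}^T$. The Born probability density $P(\mathbf{x}_A,\mathbf{p}_B)$ of measuring those $2n$ commuting quadratures is the Gaussian marginal obtained by dropping from $\gamma'$ the rows and columns of the unmeasured conjugate quadratures; this yields precisely the restricted covariance matrix $\gamma_{w}$ of Fig.~\ref{matreduce}, so that $P(\mathbf{x}_A,\mathbf{p}_B)$ takes the normal form \eqref{probGauss}. Evaluating at $(\mathbf{0},\mathbf{0})$ kills the exponential and leaves the prefactor $1/[(2\pi)^n\sqrt{\det\gamma_{w}}]$. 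Multiplying by the $\pi^n$ of Eq.~\eqref{moyenneEPR} produces the announced $1/[2^n\sqrt{\det\gamma_{w}}]$.

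Finally, the lower-bound part $\parallel R(\rho^G)\parallel_{tr}\ge \tr R(\rho^G)$ is free: it is the generic inequality $\parallel \mathcal{O}\parallel_{tr}\ge \tr \mathcal{O}$ already used to derive Theorem \ref{weak1}. The separability conclusion then follows by applying that theorem in the Gaussian setting.

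I do not anticipate any genuine obstacle. The only step requiring mild care is identifying the Gaussian marginal correctly: one must be sure that tracing out the quadratures conjugate to the measured ones truly amounts to deleting the corresponding rows and columns of $\gamma'$, and that the ordering used in defining $\mathcal{S}$ is compatible with the block-partition illustrated in Fig.~\ref{matreduce}. If a reader wants an explicit check, the $n=1$ case can be verified by a direct integration of the Wigner function, as indicated in Appendix~\ref{AppendixProb}; the general $n$ case follows from the same computation carried out blockwise.
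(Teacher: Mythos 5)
Your proposal is correct and follows the paper's own route essentially verbatim: it combines Theorem \ref{weak1} with the beam-splitter representation $\ket{\Omega_{n\times n}}=\pi^{n/2}U_{BS}^\dagger\ket{0}_{\mathbf{x}_A}\ket{0}_{\mathbf{p}_B}$ and the Gaussian marginal \eqref{probGauss} evaluated at the origin, exactly as in Sec.~\ref{sect-realignement-continuous-variable} and Appendices \ref{AppendixA} and \ref{AppendixProb}. No gaps.
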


Incidentally, we note that condition (\ref{conditionweak}) is equivalent to $\det \gamma_{w} <1/4^n$ and can thus be viewed as checking the nonphysicality of  $\gamma_{w} $ via the violation of the Schrödinger-Robertson uncertainty relation. This is in some sense similar to the PPT criterion, which is based on checking the nonphysicality of the partially transposed state.  

Consider the special case of a two-mode Gaussian state ($n=2$). Its covariance matrix can always be transformed into the normal form \cite{duan} 
\begin{equation}
\gamma^G=~\left(
\begin{array}{cccc}
a & 0 & c & 0 \\
0 & a & 0 & d \\
c & 0 & b & 0 \\
0 & d & 0 & b \\
\end{array}
\right)
\label{normalform}
\end{equation}
by applying local Gaussian unitary operations\footnote{The covariance matrix of the TMSV state, Eq. \eqref{gammaTMSV} is an example of the normal form}, which are combinations of squeezing
transformations and rotations and hence do not influence the separability
of the state. Applying Theorem \ref{weak}, we get 
\begin{equation}
\tr \, R(\rho^G)=\frac{1}{\sqrt{(a+b-2 c) (a+b+2 d)}}
\label{eq-trace-realigned-normal}
\end{equation}
and the weak realignment criterion reads 
\begin{equation}
\frac{1}{\sqrt{(a+b-2 c) (a+b+2 d)}} > 1 \Rightarrow \rho^G \text{ is entangled.} \nonumber
\end{equation}
In comparison, it was shown in \cite{Zhang} that for a covariance matrix in the normal form, Eq.~(\ref{normalform}), the trace norm of the realigned state is given by
\begin{equation}
\parallel R(\rho^G) \parallel_{tr}=\frac{1}{2\sqrt{\left(\sqrt{ab}-|c|\right)\left(\sqrt{ab}-|d|\right)}}.
\label{formuleZhang}
\end{equation}
Comparing Eqs. \eqref{eq-trace-realigned-normal} and \eqref{formuleZhang} illustrates the fact that the weak realignment criterion is generally weaker than the original form of the realignment criterion (there exist states such that $\parallel R(\rho^G) \parallel_{tr} \, >1$ while $\tr \, R(\rho^G)\le 1$).

As already mentioned, both criteria become equivalent if the state is in a Schmidt-symmetric form. In this case, for a general $n\times n$ Gaussian state $\rho^G$ described by the covariance matrix 
\begin{equation}
	\gamma^G=\begin{pmatrix}
A&C\\C^T&B
	\end{pmatrix}  ,
	\label{covmatrix}
\end{equation} 
it implies that both reduced covariance matrices must be identical, namely, $A=B$. Indeed, $\rho^G$ being Schmidt-symmetric implies that 
$F\rho^G\, F=(\rho^G)^*$. Exchanging Alice and Bob's systems yields a Gaussian state $F \rho^G F$ of covariance matrix
$\begin{pmatrix}
B&C^T\\C&A
	\end{pmatrix} $,
while $(\rho^G)^*=(\rho^G)^T$ is a Gaussian state that admits the covariance matrix
$\begin{pmatrix}
A&C^T\\C&B
	\end{pmatrix}$.
Identifying these two covariance matrices, we conclude that any Schmidt-symmetric Gaussian state must have a covariance matrix of the form
\begin{equation}
	\gamma_{sch}^G=\begin{pmatrix}
A&C\\C^T&A
	\end{pmatrix} .
\label{covmatrix-Gauss-Schmidt-sym}
\end{equation} 
In particular, both reduced covariance matrices have the same determinant, i.e., $\det A=\det B$, which is expected since we know from Eq. \eqref{eq-same-purity} that the two reduced states have the same purity, $\tr((\rho^G_{1})^2) =~\frac{1}{2^n\sqrt{\det A}}$ and $\tr((\rho^G_{2})^2) =\frac{1}{2^n\sqrt{\det B}}$. 
In Sec. \ref{sectionstep1}, we  apply a filtration procedure that brings the Gaussian state closer to a Schmidt-symmetric Gaussian state, which will have the effect of bringing the covariance matrix  \eqref{covmatrix} closer to the form \eqref{covmatrix-Gauss-Schmidt-sym}. More precisely, we will consider a filtration that equalizes the determinants of the reduced covariance matrices (hence, the two subsystems reach the same purity). We  say that a covariance matrix of the form~\eqref{covmatrix} has been \textit{symmetrized} when $\det A=\det B$.

Note that the covariance matrix in form \eqref{covmatrix-Gauss-Schmidt-sym} is a necessary but not sufficient condition for a Gaussian state to be Schmidt-symmetric. A necessary and sufficient condition must imply additional constraints on matrix $C$. Let us show this for a two-mode Gaussian state with covariance matrix in the normal form
\begin{equation}
\gamma=\begin{pmatrix}
a&0&c&0\\0&a&0&d\\c&0&a&0\\0&d&0&a
\end{pmatrix},
\label{normalform+a=b}
\end{equation} 
which is a special case of Eq. \eqref{covmatrix-Gauss-Schmidt-sym}. Using Eq. \eqref{eq-trace-realigned-normal}, we obtain
\begin{equation}
\tr(R(\rho))=\frac{1}{2\sqrt{(a-c)(a+d)}},
\end{equation} 
while Eq. \eqref{formuleZhang} implies that
\begin{equation}
  \parallel R(\rho)\parallel_{tr}=\frac{1}{2\sqrt{(a-|c|)(a-|d|)}}. 
\end{equation}  
Both formulas are thus equivalent only if $c \ge 0$ and $d \le 0$, which gives the additional constraint on $C$. Thus, the necessary and sufficient condition for a two-mode state with covariance matrix in normal form \eqref{normalform} to be Schmidt-symmetric is that $a=b$,  $c \ge 0$, and $d \le 0$.

This last point can be illustrated by considering $\ket{\Omega}=\sum_n\ket{n}\ket{n}=\int dx\, \ket{x,x}=\int dp\, \ket{p,-p}$, which can be viewed (up to normalization) as the limit of a two-mode squeezed vacuum state with infinite squeezing. It has $c>0$ and $d<0$ since the $x$'s are correlated and $p$'s are anticorrelated. It admits an operator Schmidt decomposition
$\ket{\Omega}\bra{\Omega}= \sum_{n,m}\ket{n}\bra{m}\otimes\ket{n}\bra{m}$
with all Schmidt coefficients being equal to one and the associated operators  $ A_{n,m}=B_{n,m}= \ket{n}\bra{m}$; hence it is Schmidt-symmetric since it satisfies $B_{n,m}=A_{n,m}^*$. Now, let us apply a phase shift of $\pi$ on one of the modes, yielding $\ket{\Omega'}=\sum_n (-1)^n\ket{n}\ket{n}=\int dx\, \ket{x,-x}=\int dp\, \ket{p,p}$. Here, we have  $c<0$ and $d>0$ since the $x$'s are anticorrelated and $p$'s are correlated, so it should not be Schmidt-symmetric. Accordingly, it can be checked that $\ket{\Omega'}\bra{\Omega'}$ does not admit an operator Schmidt decomposition with  $B_{n,m}=A_{n,m}^*$. We may decompose it as $\ket{\Omega'}\bra{\Omega'}= \sum_{n,m} A_{n,m}\otimes B_{n,m}$
where all Schmidt coefficients are again equal to one and, for example, $A_{n,m}=B_{n,m}= i^{n+m} \ket{n}\bra{m}$ or $ A_{n,m}= (-1)^{n+m} B_{n,m}= \ket{n}\bra{m}$, but in all cases $B_{n,m} \ne A_{n,m}^*$. This is an example of an (unormalized) state verifying $F\rho F = \rho^*$ but that is not Schmidt-symmetric. Since $\ket{\Omega}$ and $\ket{\Omega'}$ share the same Schmidt coefficients, the trace norm of their realignments coincide and are equal to the trace of the realignment of $\ket{\Omega}$ only (in contrast, the trace of the realignment of $\ket{\Omega'}$ vanishes).

The link between $\ket{\Omega}$ and $\ket{\Omega'}$ suggests that a suitable local phase shift operation performed on one of the modes of a state can be useful to make the state closer to being Schmidt-symmetric, and hence to enhance the detection capability of the weak realignment criterion (an example of this feature is shown in Sec.~\ref{Section:Examples}).
Applying a local phase shift operation is, however, not always sufficient to make the state exactly Schmidt-symmetric, as can be seen by considering a covariance matrix of the form \eqref{normalform}, where we impose that $c\ge 0$ and $d \le 0$. Indeed, as soon as $a\ne b$, one can verify that $\tr(R(\rho)) < \, \parallel R(\rho)\parallel_{tr}$ as a consequence of the well-known inequality between arithmetic and geometric means, $\sqrt{ab} \le (a+b)/2$, which is saturated if and only if $a=b$.

\section{Improvement of the weak realignment criterion via filtration}
\label{sectionstep1}


According to Theorem~\ref{weak}, the trace norm of the realigned state $\parallel R(\rho)\parallel_{tr}$ is greater than (or equal to) $\tr \, R(\rho)$, which for $n\times n$ Gaussian states is a quantity that solely depends on the determinant of the restricted covariance matrix $\gamma_w$. For this reason, while it is easier to compute (especially in higher dimension), the weak realignment criterion has generally a lower entanglement detection performance than the realignment criterion (as applied in \cite{Zhang}). This suggests the possibility of improving the criterion by transforming the state via a suitable (invertible) operation prior to applying the criterion.

Since the trace norm and trace of the realigned state  are equivalent for a Schmidt-symmetric state, the natural idea is to find a procedure that ideally transforms the initial state into a Schmidt-symmetric state without of course creating or destroying entanglement. We focus here on $n\times n$ Gaussian states and exploit the fact that any Schmidt-symmetric Gaussian state admits a covariance matrix of the form \eqref{covmatrix-Gauss-Schmidt-sym}, in particular its reduced determinants are equal. Even if this is not a sufficient condition for a state to be Schmidt-symmetric, we choose to symmetrize the initial state by equalizing the reduced determinants of its covariance matrix in order to reach a state that is closer to (ideally equal to) a Schmidt-symmetric state. We then apply Theorem~\ref{weak} on the resulting symmetrized state in order to get an enhanced entanglement detection performance.

An  $n \times n$ Gaussian state $\rho$ is fully characterized by its displacement vector $\mathbf{d}$ and covariance matrix $\gamma$ defined in Eq.~\eqref{covmatrix}.
Since first-order moments are irrelevant as far as entanglement detection is concerned, we can restrict to states with $\mathbf{d}=0$ with no loss of generality. To symmetrize the state, we will exploit a filtering operation in the Fock basis as follows. Suppose that the first subsystem has a smaller noise variance or more precisely that 
 $\det A < \det B$ in Eq.~\eqref{covmatrix}, meaning that the purity of the first subsystem is larger than that of the second subsystem (the opposite case is treated below). We process each mode of the first subsystem through a (trace-decreasing) noiseless amplification map \cite{HNLA,universal-squeezer,He}, that is
\begin{equation}
\rho_{AB}\rightarrow \tilde\rho_{AB} = c \, (t^{\hat n/2}\otimes \mathds{1}) \rho_{AB}  (t^{\hat n/2}\otimes \mathds{1}),
\label{eq-hnla}
\end{equation}
where $c$ is a constant, $\hat n$ is the total photon number in the modes of the first subsystem, and $t>1$ is the transmittance or gain ($\sqrt{t}$ is the corresponding amplitude gain). It can be checked that this map effects an increase of the noise variance of the first subsystem (it increases $\det A$). Note that if the input state $\rho_{AB}$ is Gaussian, then the output state $\tilde\rho_{AB}$ remains Gaussian \cite{gagatsos}. Crucially, this map does not change the separability of the state (the amount of entanglement might change, but no entanglement can be created from scratch or fully destroyed). Therefore, $\tilde\rho_{AB}$ should be closer to a Schmidt-symmetric state and is a good candidate for applying Theorem~\ref{weak}.


To find the covariance matrix of the output state $\tilde\rho_{AB}$, we follow the evolution of the Husimi function defined as 
\begin{equation}
Q(\boldsymbol{\alpha})=\frac{1}{\pi^n}\bra{\boldsymbol{\alpha}}\rho\ket{\boldsymbol{\alpha}}
\end{equation} 
where $\ket{\boldsymbol{\alpha}}$ is a vector of coherent states. For an $n\times n$ Gaussian state $\rho_{AB}$, the Husimi function is given by
\begin{equation}
Q(\boldsymbol{\alpha,\beta})=\frac{1}{\pi^{2n}\sqrt{\det(\gamma+\frac{\mathds{1}}{2})}}e^{-\frac{1}{2} \boldsymbol{r}^T\Gamma \boldsymbol{r}}
\label{HusimiGaussian}
\end{equation}
where $\alpha$ is associated to the first system and $\beta$ to the second, $\Gamma=(\gamma+\mathds{1}/2)^{-1}$, and 
\begin{eqnarray}
\lefteqn{   \boldsymbol{r}=\sqrt{2}\Big(\Re(\alpha_1),\Im(\alpha_1),\cdots,\Re(\alpha_n),\Im(\alpha_n),  }  \hspace{2cm}  \nonumber \\
&& \Re(\beta_1), \Im(\beta_1),\cdots,\Re(\beta_n),\Im(\beta_n)\Big)^T 
\end{eqnarray}
with $\Re(\cdot)$ and $\Im(\cdot)$ representing the real and imaginary parts. The noiseless amplification map enhances the amplitude of a coherent state as $\ket{\alpha}\rightarrow e^{(t-1)|\alpha|^2 /2}\ket{\sqrt{t}\,\alpha}$. Therefore, the Husimi function of the output state $\tilde\rho_{AB}$ is equal to (see \cite{fiurasek2} for more details)
\begin{eqnarray}
\tilde Q(\boldsymbol{\alpha,\beta})&\propto&\frac{1}{\pi^{2n}}\bra{\boldsymbol{\alpha,\beta}}(t^{\hat n /2}\otimes \mathds{1})\rho_{AB}(t^{\hat n /2}\otimes \mathds{1})\ket{\boldsymbol{\alpha,\beta}}\nonumber\\
&=&e^{(t-1)(|\alpha_1|^2+\cdots+|\alpha_n|^2)}Q(\sqrt{t} \, \boldsymbol{\alpha,\beta}).
\end{eqnarray}
Since  the output state $\tilde\rho_{AB}$ is a Gaussian state, its Husimi function is still of the form~(\ref{HusimiGaussian}) with an output covariance matrix $\tilde{\gamma}$ (and corresponding $\tilde{\Gamma}$). Comparing the exponent of both expressions, we find that
\begin{eqnarray}
\tilde{\Gamma}&=&M\Gamma M-(M^2-\mathds{1})\\
\tilde{\gamma}&=&\left[M\left(\gamma+\frac{\mathds{1}}{2}\right)^{-1}M-(M^2-\mathds{1})\right]^{-1}-\frac{\mathds{1}}{2}\nonumber
\end{eqnarray}
where
\begin{equation}
M=\begin{pmatrix}
\sqrt{t}\,\mathds{1}_{2n\times 2n}&0\\0&\mathds{1}_{2n\times 2n}
\end{pmatrix}.
\end{equation}
The last point before applying the weak realignment criterion on $\tilde\rho_{AB}$ is to find a suitable value for the transmittance $t$ (note that $t$ must be greater than $1$). A simple ansatz is to choose $t$ so that the filtered state $\tilde\rho_{AB}$ is a symmetrized Gaussian state,  that is, the noise variance of both subsystems are equal ($\det  A= \det  B$).

Now, if the first subsystem has a larger noise variance (namely $\det A > \det B$), we can simply exchange the roles of $A$ and $B$ and apply the noiseless amplification map on the modes of the second subsystem. Alternatively, we may consider another filtering operation in the Fock basis by processing each mode of the first subsystem through a (trace-decreasing) noiseless attenuation map \cite{HNLAtt,gagatsos}. Formally, it is defined exactly as the noiseless amplification map in Eq. \eqref{eq-hnla} but with a transmittance $t<1$, so it leads to very similar calculations. Physically, the noiseless attenuation map has the advantage to admit an exact physical implementation (unlike the noiseless amplification map), which provides us with another method to compute the output covariance matrix $\tilde{\gamma}$ (and corresponding $\tilde{\Gamma}$).  Indeed, processing the state of a mode through a noiseless attenuation map is equivalent to processing it through a beam splitter of transmittance $t$ (with vacuum on an ancillary mode) and then postselecting the output conditionally on the vacuum on the ancillary mode (see Fig.~\ref{symm}). We give the details of this alternative calculation for the two-mode case in Appendix \ref{AppendixB}.

\begin{figure}
	\includegraphics[trim= 4cm 11cm 6cm 12cm ,clip,width=0.4\textwidth]{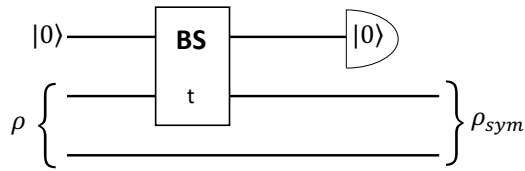} \caption{\label{symm} Circuit implementing the filtration on a two-mode Gaussian state $\rho$ : a noiseless attenuation map is applied on the first mode of $\rho$ (BS represents a beam splitter of transmittance $t$). The resulting state $\rho_{sym}$ is symmetrized if the value of $t$ is well chosen.}
\end{figure}

We note that the enhancement of the weak realignment criterion obtained via prior filtration can be viewed as the consequence of using $\tr R(\rho)= \tr(  \rho \, \ket{\Omega}\bra{\Omega} ) $ but with a better witness operator than $\ket{\Omega}\bra{\Omega}$. Let us define the filtration map as $\rho \to \Lambda_F (\rho)$. For example, consider the noiseless attenuation map $\Lambda_F (\rho) \propto (t^{\hat n/2}\otimes \mathds{1}) \rho  (t^{\hat n/2}\otimes \mathds{1})$ applied on a $1\times 1$ state (with $t<1$). The trace of the realigned state after filtration can be expressed as
\begin{eqnarray}
\tr( R(\Lambda_F (\rho)))&=& \tr(  \Lambda_F (\rho) \, \ket{\Omega}\bra{\Omega} ) \nonumber \\
&=& \tr(\rho \, \Lambda_F^\dag (\ket{\Omega}\bra{\Omega})) 
\end{eqnarray}
where $ \Lambda_F^\dag$ stands for the dual filtration map. In this example,  we note that $\Lambda_F^\dag  = \Lambda_F$ and  $\Lambda_F (\ket{\Omega}\bra{\Omega}))$ is proportional to the projector onto a two-mode squeezed vacuum state. In other words, the enhancement in this example is obtained by computing the fidelity of $\rho$ with respect to $\sum_n t^{n/2} \ket{n}\ket{n}$ instead of $\ket{\Omega}=\sum_n \ket{n}\ket{n}$.

In the next section, we  apply this filtration procedure on several examples of Gaussian states in order to show how the weak realignment criterion assisted with filtration can indeed improve entanglement detection.

\section{Applications}\label{Section:Examples}

\subsection{Two-mode squeezed vacuum state with Gaussian additive noise}
\label{ExampleEPR}
We first illustrate how the filtration procedure enables a better entanglement detection on two-mode entangled Gaussian states. In particular, we  show that for specific examples, computing the trace of the realigned state (after filtration) is equivalent to computing its trace norm. Let us consider a two-mode squeezed vacuum state whose first mode is processed through a Gaussian additive-noise channel as shown in Fig. \ref{exemple}. We denote $V$ the variance of this added noise. It is known that the entanglement  of the two-mode squeezed state decreases when we increase the noise variance, until $V=1$ at which point it becomes separable (if $V\geq1$, the channel is entanglement breaking \cite{holevo}).  Our Gaussian state has a covariance matrix
\begin{equation} \label{covEprplusbruit}
\gamma=\left(
\begin{array}{cccc}
V +\frac{\cosh 2r}{2} & 0 & \frac{\sinh 2r}{2} & 0 \\
0 & V +\frac{\cosh 2r}{2} & 0 & -\frac{\sinh 2r}{2} \\
\frac{\sinh 2r}{2} & 0 & \frac{\cosh 2r}{2} & 0 \\
0 & -\frac{\sinh 2r}{2} & 0 & \frac{\cosh 2r}{2} \\
\end{array}
\right)
\end{equation}
where $r>0$ is the squeezing parameter.
\begin{figure}
	\includegraphics[trim= 4.5cm 19cm 7cm 4cm ,clip,width=0.4\textwidth]{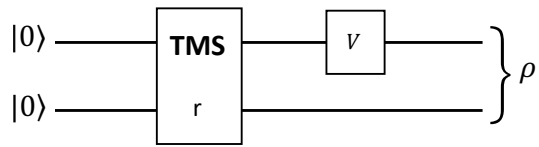} \caption{\label{exemple} 
		Example of a Gaussian state created from a two-mode vacuum state processed through a two-mode squeezer (with squeezing parameter $r>0$) and a Gaussian additive-noise channel acting on the first mode (with noise variance $V$).}
\end{figure}
By applying Eq.~(\ref{formuleRealCriterion}) where 
\begin{equation}
	\gamma_{w}=\left(
	\begin{array}{cc}
	\frac{1}{2} \left(V +e^{-2r}\right) & 0 \\
	0 & \frac{1}{2} \left(V +e^{-2r}\right) \\
	\end{array}
	\right)
\end{equation}
we find $\tr \, R(\rho)=1/(V+e^{-2r})$. According to Theorem~\ref{weak}, entanglement is thus detected if $V<1-e^{-2r}$. Clearly, for a finite squeezing parameter $r$, there exist entangled states with $1-e^{-2r}<V<1$ which are not detected. As a result, the weak realignment criterion does not always detect entanglement in this example (it becomes perfect at the limit of infinite squeezing, $r\to\infty$).

In comparison, it was shown in  \cite{Zhang} that for a matrix in the normal form Eq.~(\ref{normalform})) the trace norm of the realignment is given by Eq. (\ref{formuleZhang}). In our example, we obtain 
\begin{equation}
\parallel R(\rho)\parallel_{tr}=\frac{1}{\sqrt{(\cosh2r+2V) \cosh2r}-\sinh2r} 
\end{equation}
so that entanglement is detected if $V<~\tanh2r$.  Here again, the realignment criterion leaves some entangled states undetected (but it is more sensitive than the weak realignment criterion since $ 1-e^{-2r} <  \tanh2r$, $\forall r>0$).

Let us now symmetrize the state with the filtration procedure introduced in Sec. \ref{sectionstep1}, that is, we process the first mode (which has a larger noise variance) through a noiseless attenuation map. By inspection, we find that the optimal transmittance is $t=\tanh^2 r$ and the resulting symmetrized Gaussian state $\rho^{sym}$ admits the covariance matrix\footnote{Note that even if $V=0$ (i.e. the state already has a symmetric covariance matrix) we may still process one of its modes through the noiseless attenuation map. It simply yields another (symmetric) two-mode squeezed vacuum state with lower entanglement.}
\footnotesize
\begin{eqnarray}
\label{eq-covariance-symmetrized-state}
\gamma^{sym}&=&\frac{1}{8 (V +\cosh 2r)}\times\\
&&\begin{pmatrix}
(4 V  \cosh 2r{+}\cosh 4 r{+}3)\,\mathds{1}&(8\sinh^2r\cosh^2r)\,
\sigma_z \\ (8\sinh^2r\cosh^2r)\,\sigma_z&(4 V  \cosh 2r{+}\cosh 4 r{+}3)\,\mathds{1}
\end{pmatrix}\nonumber
\end{eqnarray}
\normalsize
where $\sigma_z=\begin{psmallmatrix}
1&0\\0&-1
\end{psmallmatrix}$. We now apply Eq.~(\ref{formuleRealCriterion}) to $\rho^{sym}$,  where 
\begin{equation}
\gamma^{sym}_{w}=\frac{1}{2}\left(
\begin{array}{cc}
\frac{V  \cosh 2r+1}{V +\cosh 2r} & 0 \\
0 & \frac{V  \cosh 2r+1}{V +\cosh 2r} \\
\end{array}
\right)
\end{equation}
which gives 
\begin{equation}
\tr \, R(\rho^{sym})=\frac{V+\cosh 2r}{1+V \cosh 2r}. 
\end{equation}
Thus, entanglement is detected if $\tr R(\rho^{sym})>1$ which is equivalent to $V<1$, for all $r$. Hence, all entangled states of the form \eqref{covEprplusbruit} are now detected. Note that 
 $\tr \, R(\rho^{sym}) = \parallel R(\rho^{sym})\parallel_{tr}$ here according to Theorem 4. Indeed, the covariance matrix \eqref{eq-covariance-symmetrized-state} is in the form \eqref{normalform+a=b} with $c>0$ and $d<0$, so we have reached a Schmidt-symmetric state.
 
As a consequence, we have confirmed that the entanglement detection for Gaussian states is improved if one symmetrizes the state before applying the weak realignment criterion. In particular, in this specific example, the weak realignment criterion is as strong as the original realignment criterion with symmetrization since $\tr \, R(\rho^{sym})=\parallel R(\rho^{sym})\parallel_{tr}$ and even stronger than the realignment criterion without symmetrization based on $\parallel R(\rho)\parallel_{tr}$. Moreover, applying the symmetrization procedure and computing the trace of the realigned state (via the determinant of the restricted covariance matrix) are much easier than computing the trace norm of the realigned state (as developed in \cite{Zhang}). In Fig. \ref{EPRplusBruit} (upper panel), we illustrate the fact that the trace and trace norm of the realigned state can be increased by the filtration procedure (the value without filtration is found when $t=1$). We notice that, although the optimal value of the transmittance $t=\tanh^2(r)$ allows for the detection of entanglement, there are actually many other values of $t$ that allow for such a detection too. Moreover, it seems that the symmetrized state $t=\tanh^2(r)$ is not necessarily the best way of filtering the state of this example as it does not give the highest possible value of the trace of $R(\rho)$.
 

\begin{figure}
	  \subfloat[][Noise added on the first mode -- attenuation map]{%
	\includegraphics[width=0.45\textwidth]{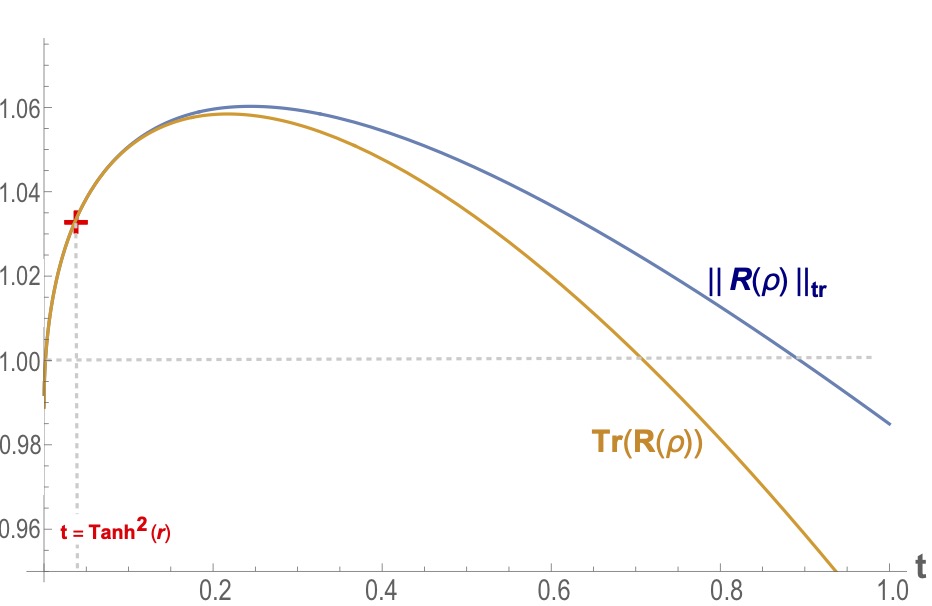} 
	}
	\hfill
	\subfloat[][Noise added on the second mode -- amplification map]{%
	\includegraphics[width=0.45\textwidth]{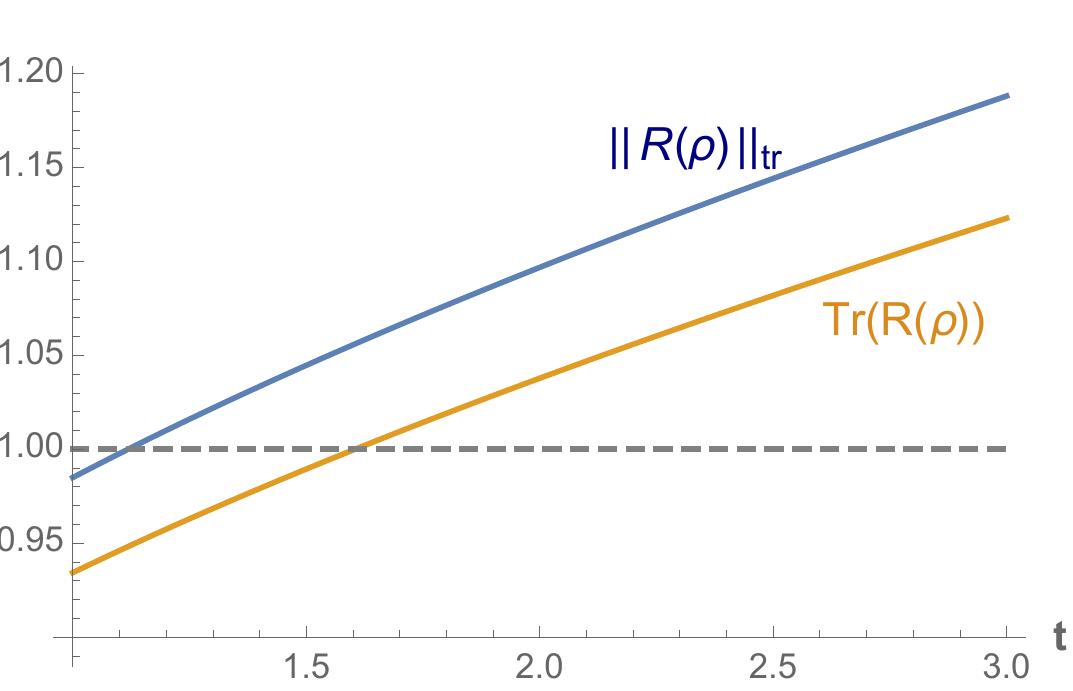} 
	}
	\caption{\label{EPRplusBruit} Comparison of the trace $\tr \, R(\rho)$ and trace norm $\parallel R(\rho)\parallel_{tr}$ of the realigned state for a two-mode squeezed vacuum state processed through a Gaussian additive-noise channel (we choose $r=0.2$ and $V=0.4$). \\
		(a) The noise is added on the first mode and filtering consists in processing this mode via a noiseless attenuation map. The red cross shows the point at $t=\tanh^2(r)$ where the covariance matrix has been symmetrized and $\tr \, R(\rho)=\parallel R(\rho)\parallel_{tr}$. \\
		(b) The noise is added on the second mode and filtering works by processing the first mode via a noiseless amplification map. The values of the trace and the trace norm coincide when $t=\frac{1}{\tanh^2r}\approx25.7$.}
\end{figure}

As a second example, let us start with the same two-mode squeezed vacuum state but  add noise on the second mode instead of the first. The covariance matrix reads
 \begin{equation}
 \gamma=\left(
 \begin{array}{cccc}
 \frac{\cosh 2r}{2} & 0 & \frac{\sinh 2r}{2} & 0 \\
 0 & \frac{\cosh 2r}{2} & 0 & -\frac{\sinh 2r}{2} \\
 \frac{\sinh 2r}{2} & 0 & V +\frac{\cosh 2r}{2} & 0 \\
 0 & -\frac{\sinh 2r}{2} & 0 &V + \frac{\cosh 2r}{2} \\
 \end{array}
 \right).
 \end{equation}
As before $\tr \, R(\rho)=1/(V+e^{-2r})$ so the weak realignment criterion alone does not detect all entangled states. To improve on this, we could of course apply the noiseless attenuation map on the second mode, which would give the exact same results. Alternatively, we may explore another filtration procedure, which consists in applying the noiseless amplification map on the first mode. As can be seen in Fig. \ref{EPRplusBruit} (lower panel), the values of the trace and trace norm increase with $t$, and if $t$ is chosen big enough, we detect entanglement. In order to symmetrize the covariance matrix, we would need a noiseless amplifier of transmittance $t=\frac{1}{\tanh^2r}$, which is a limiting case that would yield a two-mode squeezed vacuum state with infinite squeezing. For this optimal value of $t$, we have $\tr \, R(\rho_{sym})=\frac{1}{V}$ and thus entanglement is always detected when $V<1$. Hence, here again, the weak realignment criterion assisted with filtration allows us to detect all entangled states.

 \subsection{Random two-mode Gaussian states}
 Let us consider two other examples of random two-mode Gaussian states with their covariance matrices written in the normal form:
\scriptsize
\begin{eqnarray}
\gamma_1=\left(
\begin{array}{cccc}
1.46 & 0 & 0.83 & 0 \\
0 & 1.46 & 0 & -0.23 \\
0.83 & 0 & 0.80 & 0 \\
0 & -0.23 & 0 & 0.80 \\
\end{array}
\right),\,   \nonumber \\
\gamma_2=\left(
\begin{array}{cccc}
1.29 & 0 & -0.76 & 0 \\
0 & 1.29 & 0 & 0.44 \\
-0.76 & 0 & 0.83 & 0 \\
0 & 0.44& 0 & 0.83 \\
\end{array}
\right).
\end{eqnarray}
\normalsize
These states are not PPT so they  are entangled. These examples are interesting because in both cases $\parallel R(\rho)\parallel_{tr}>1$ but $\tr \, R(\rho)<1$, so entanglement is detected by the realignment criterion but not by its weak formulation. We thus need to apply the filtration procedure in order to enhance the detection with the weak realignment criterion. In Fig. \ref{Examples}, we show the evolution of $\tr \, R(\rho)$ as a function of the transmittance $t$ of the noiseless attenuation map applied on the first mode ($t=1$ corresponds to the initial value when no filtration is applied). The red cross indicates the exact point when the covariance matrix has been symmetrized. In the first example (see Fig. \ref{Examples}a), the filtration procedure works well and many values of $t$ allow us to detect entanglement. In particular, the entanglement is detected at the optimal value of $t$ (note that the trace and trace norm do not exactly coincide there, which witnesses the fact that the symmetrized state is not exactly a Schmidt-symmetric state). In the second example (see Fig. \ref{Examples}b), however, filtration alone is not sufficient and entanglement is never detected by the weak realignment criterion. Even if filtration is performed by applying a noiseless amplifier map on the second mode, we observe the same results. Nevertheless, entanglement can still be detected if we apply a local rotation (a $\pi$ phase shift on one of the two modes which has the effect to flip the sign of the $c$ and $d$ elements in the normal form of the covariance matrix)  prior to the filtration procedure, which makes the covariance matrix look similar to the first example. This is shown by the dashed green curve on Fig. \ref{Examples}b. 
Furthermore, by applying an appropriate local squeezing on the second mode of the state after the noiseless attenuator on the first mode, we may always reach a Schmidt-symmetric state (provided $c$ and $d$ have opposite signs in the covariance matrix \eqref{normalform} of the initial state, otherwise the state is anyway separable). This indicates that applying a suitable local phase shift followed by a suitable noiseless attenuator (or amplifier) and finally a suitable local squeezer yields a filtration procedure that always allows the detection of entanglement for a two-mode Gaussian state.

%

%

Note that  we cannot plot the evolution of $\parallel R(\rho)\parallel_{tr}$ as a function of $t$ in Fig. \ref{Examples} (in contrast with Fig. \ref{EPRplusBruit}) since the covariance matrix after filtration is not anymore in the form (\ref{normalform}). The blue dashed line represents its initial value before the filtration is applied.

\begin{figure}
		  \subfloat[][Covariance matrix $\gamma_1$]{%
	\includegraphics[width=0.45\textwidth]{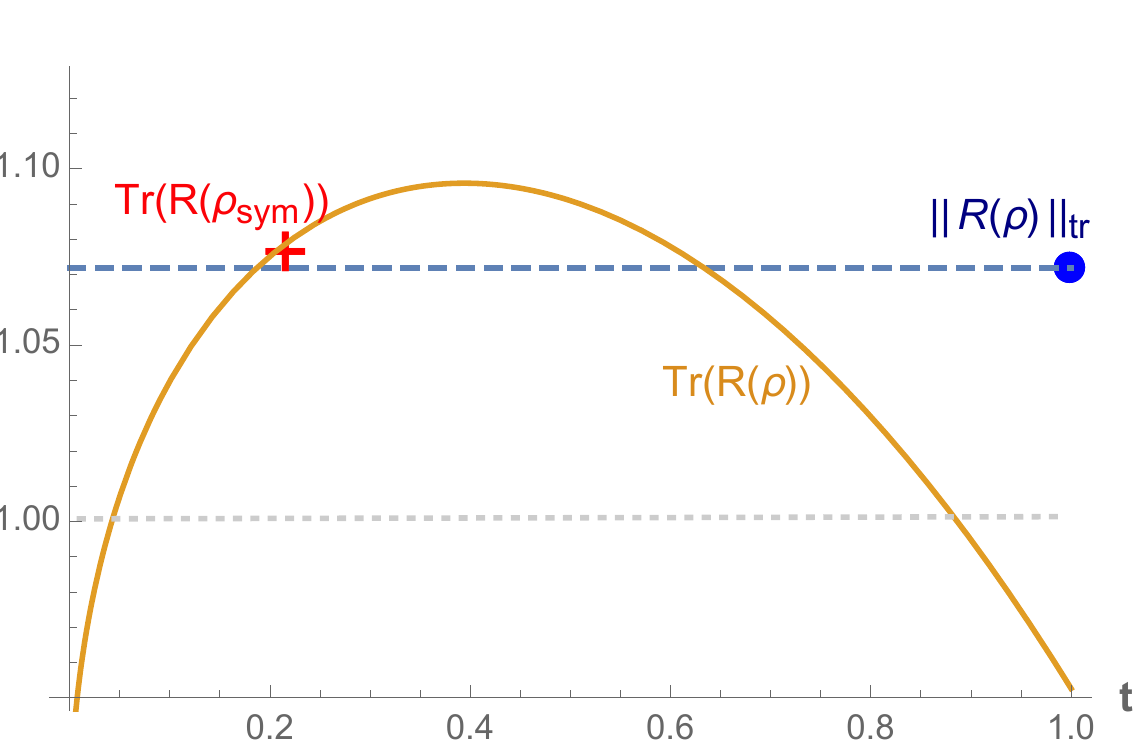}
	}
	\hfill
	\subfloat[][Covariance matrix $\gamma_2$]{%
	\includegraphics[width=0.45\textwidth]{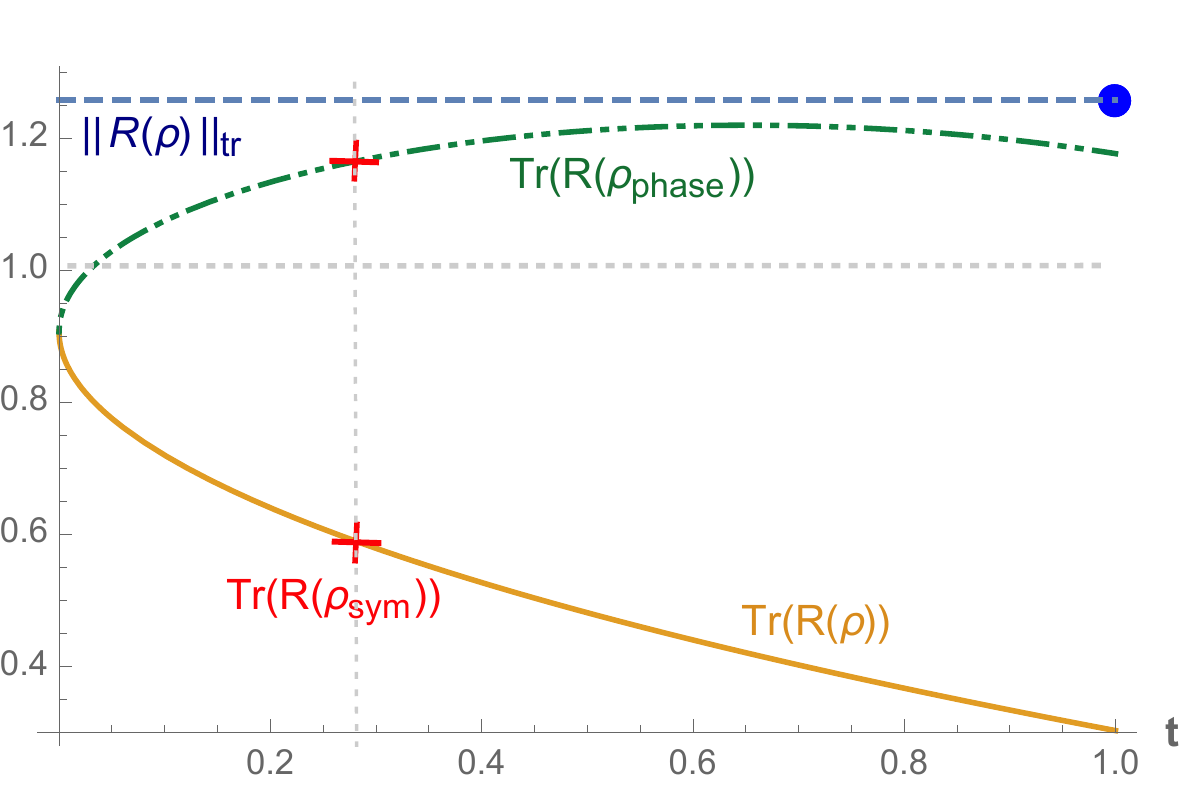} 
	}
	\caption{\label{Examples} Evolution of the trace $\tr \, R(\rho)$ of the realigned state as a function of $t$ for two-mode Gaussian states with covariance matrices (a) $\gamma_1$ and (b) $\gamma_2$ after filtering (noiseless attenuation on the first mode). Both examples are entangled states. The blue dashed line represents $\parallel R(\rho)\parallel_{tr}$ before the filtration and the red cross shows the value of $\tr \, R(\rho_{sym})$ when the covariance matrix has been symmetrized. The green dashed curve shows the evolution of the trace $\tr \, R(\rho_{phase})$, where $\rho_{phase}$ is the state obtained by applying a $\pi$ phase shift prior to the filtration.}
\end{figure}

 \subsection{Examples of $2\times 2$ NPT Gaussian states}

Let us now move on to examples of $2\times 2$ Gaussian states (in which case the PPT criterion is not any more necessary and sufficient). We extend the example of Sec. \ref{ExampleEPR} by considering that Alice and Bob share two instances of a  two-mode squeezed vacuum state with added noise. The covariance matrix is thus given by
\footnotesize
\begin{equation}
\gamma_{EPR}{=}\left(
\begin{array}{cccc}
\left(V {+}\frac{\cosh 2r}{2} \right)\, \mathds{1}& 0 & \frac{\sinh 2r}{2} \,\sigma_z& 0 \\
0 &\left(V {+}\frac{\cosh 2r}{2} \right)\, \mathds{1}& 0 & \frac{\sinh 2r}{2} \,\sigma_z\\
\frac{\sinh 2r}{2} \,\sigma_z& 0 &\frac{\cosh 2r}{2} \, \mathds{1}& 0 \\
0 & \frac{\sinh 2r}{2}\,\sigma_z & 0 &\frac{\cosh 2r}{2} \, \mathds{1} \\
\end{array}
\right).
\end{equation}
\normalsize
This state is always detected by the PPT separability criterion. We can also add some rotations on Bob's modes in order to get another state whose covariance matrix is given by $\gamma'_{EPR}=R(\theta,\tau)\, \gamma_{EPR} \, R^T(\theta,\tau)$ with
\begin{eqnarray}
 R(\theta,\tau)&=&\begin{pmatrix}
\mathds{1}_{4\times 4}&0&0&0\\0&\cos \theta&\sin\theta&0\\0&-\sin\theta&\cos\theta&0\\0&0&0&\mathds{1}_{2\times2}
\end{pmatrix}\\
&\times&\begin{pmatrix}
\mathds{1}_{4\times 4}&0&0&0&0\\
0&\sqrt{\tau}&0&-\sqrt{1-\tau}&0\\
0&0&\sqrt{\tau}&0&-\sqrt{1-\tau}\\
0&\sqrt{1-\tau}&0&\sqrt{\tau}&0\\
0&0&\sqrt{1-\tau}&0&\sqrt{\tau}
\end{pmatrix}  .  \nonumber
\end{eqnarray}
This rotated state is always entangled and detected by the PPT criterion. The entanglement detection effected by the weak realignment criterion is, however, depending on the values of $\theta$ and $\tau$ as follows.
  \begin{itemize}
\item If $\theta =0$ and $\tau=1$ we have  $\gamma'_{EPR}=\gamma_{EPR}$ and the calculations are exactly the same as in Sec. \ref{ExampleEPR} (but everything is squared because we now have two states). It means in particular that $\tr \, R(\rho_{EPR})=\frac{1}{\left(e^{-2 r}+V\right)^2}$ is not always greater than 1, but if we applied a suitable filtration with $t=\tanh^2(r)$, entanglement becomes always detected.

\item If $\theta=\pi$ and regardless of  the value of $\tau$, we have $\tr \, R(\rho'_{EPR})=\frac{1}{1+V^2+2V\cosh 2r}$ which is always smaller than 1. Entanglement is thus never detected. Note that in this particular case, the filtration does not improve the value of $\tr \, R(\rho')$ even if we try to add a rotation before the filtration. The key point is that this state does not have EPR-like correlations 

\item If $\theta=0$ and regardless of  the value of $\tau$, we have $\tr \, R(\rho'_{EPR})=\frac{1}{(\cosh 2r -\sqrt{\tau}\sinh 2r +V)^2}$. In some cases, entanglement is detected without any filtration. In some other cases, entanglement is not straightforwardly detected, but the filtration helps in the detection. For example, if $r=1$, $V=0.8$, and $\tau=0.9$,  then $\tr \, R(\rho'_{EPR})\approx0.8<1$ and entanglement is not detected as such. However, if we apply the filtration procedure, we see in Fig. \ref{Example3}  (upper panel) that there are many values of $t$ that enable entanglement detection. 

\begin{figure}
	\includegraphics[width=0.35\textwidth]{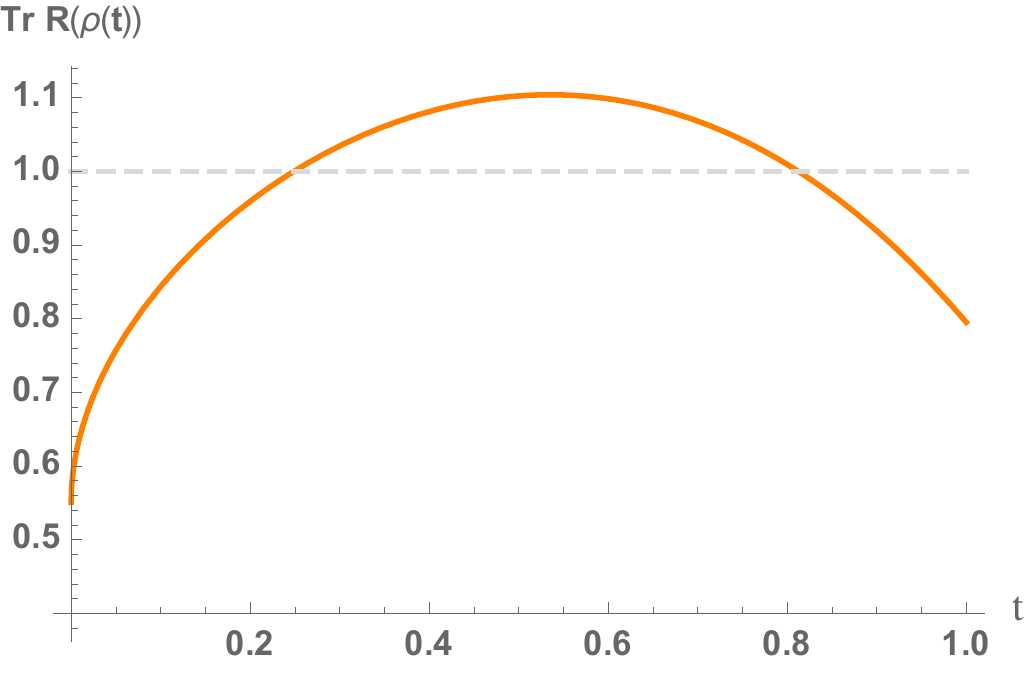} 
	\caption{\label{Example3} Evolution of the trace $\tr \, R(\rho'_{EPR})$ of the realigned state after filtration as a function of $t$ for $r=1,V=0.8,\tau=0.9$ and $\theta=0$.The entanglement is detected in the interval of $t$ values such that the trace exceeds 1.
	}
\end{figure} 
 \end{itemize}

\section{Conclusions}


We have introduced a weak formulation of the realignment criterion based on the trace of the realigned state $R(\rho)$, which has the advantage of being much easier to compute than the original formulation of the realignment criterion, especially in higher dimensions. It has a simple physical implementation as computing $\tr \, R(\rho)$ is equivalent to measuring the $\ket{\Omega}$ component of state $\rho$ via linear optics and homodyne measurements. Moreover, for states in the Schmidt-symmetric form, both realignment criteria --- the weak and the original formulations --- are equivalent. We focused especially on Gaussian states and showed that applying a suitable filtration procedure prior to applying the weak realignment criterion often allows for a better entanglement detection. In particular, we have explored a filtration based on noiseless amplification or attenuation, which is an invertible operation that transforms the state into a symmetrized form such that the entanglement detection is enhanced (this procedure may even surpass the original realignment criterion while it is simpler). We have provided examples of the application of this procedure for various  $1\times 1$ and $2 \times 2$ Gaussian states. These examples illustrate the power of the method (it can be made to detect all  entangled $1\times 1$ Gaussian states), even though we have found cases where it leaves the entanglement of $2 \times 2$ states undetected.

A question that we leave open in this work is whether the weak realignment criterion assisted with suitable prior filtration can be stronger than the PPT criterion to the degree that it can detect bound entangled states. The weak realignment criterion is weaker than both the original realignment and PPT criteria, which are two incomparable criteria (except for states in the symmetric subspace, where they coincide). Hence, as such, it cannot detect bound entanglement.  We have not been able to find instances where adding filtration allowed us to detect bound entangled states, although it should in principle be possible to bring the state close enough to a Schmidt symmetric state so that bound entanglement is detected. Note, however, that this can only be the case if the original realignment criterion detects the entanglement of the state, that is, if the Schmidt-symmetric state that is approached via filtration is not within the symmetric subspace (otherwise, the weak realignment criterion tends to the realignment criterion which itself coincides with the PPT criterion, so no bound entanglement can be detected). 
This puts severe constraints on where to seek  the detection of bound entanglement using the weak realignment criterion assisted with filtration.

As a future work, it would also be interesting to explore other possible filtration procedures in order to further improve the entanglement detection in higher dimensions. Alternatively, another interesting goal would be to find a physically implementable protocol for the original realignment criterion and not its weak form (that is, for evaluating the trace norm of $R(\rho)$ instead of its trace with optical components).

\medskip
\noindent {\it Acknowlegments}: 
AA. thanks  Cheng-Jie Zhang for helpful discussions.
This work was supported by the F.R.S.- FNRS Foundation under Project No. T.0224.18, by the FWF under Project No. J3653-N27, and by the European Commission under project ShoQC within the ERA-NET Cofund Programme in Quantum Technologies  (QuantERA). AH also acknowledges financial support from the F.R.S.-FNRS Foundation and from the Natural Sciences and Engineering Research Council of Canada (NSERC).
\bigskip


\appendix

\section{Proof of Theorem \ref{theoreal1}}
\label{Appendix0}
We give here the proof of Theorem \ref{theoreal1}.
	Let us construct an entanglement witness $\mathcal{W}$, that is, an observable with a positive expectation value on all separable states. Let us define $\mathcal{W}=\mathds{1}-\sum_i^{r}A_i\otimes B_i$ and let us check that $\tr (\rho_{sep}\mathcal{W})\geq0$ where $\rho_{sep}=(\ket{a}\otimes\ket{b})(\bra{a}\otimes\bra{b})$ is a separable (product) state. First we remark that
	\begin{eqnarray}
	\tr (\rho_{sep}\mathcal{W})&=&(\bra{a}\otimes\bra{b})\mathcal{W}(\ket{a}\otimes\ket{b})\\
	&=&1-\sum_{i}^{r}\mean{a|A_i|a}\mean{b|B_i|b}\nonumber\\
	&\geq&1-\sqrt{\sum_{i}^{r}|\mean{a|A_i|a}|^2}\sqrt{\sum_{i}^{r}|\mean{b|B_i|b}|^2}\nonumber
	\end{eqnarray}
	where we used the Cauchy-Schwarz inequality in the last step. Now, since the $\{A_i\}$  form a basis, we can write $\ketbra{a}{a}=\sum_j\alpha_j A_j$ where $\alpha_j=\mean{a|A_i|a}$, and similarly for $\ketbra{b}{b}$. This allows us to write
	\begin{eqnarray}
	1&=&\parallel \ketbra{a}{a}\parallel^2=\tr(\ketbra{a}{a}(\ketbra{a}{a})^\dag)=\tr(\sum_{ij}\alpha_iA_i\alpha_j^*A_j^\dag)\nonumber\\
	&=&\sum_{ij}\alpha_i\alpha_j^*\tr(A_iA_j^\dag)=\sum_i|\alpha_i|^2=\sum_i|\mean{a|A_i|a}|^2
	\end{eqnarray}
	and similarly $\sum_i|\mean{b|B_i|b}|^2=1$ so that $\tr(\rho_{sep}\mathcal{W})\geq 0$. Thus, $\mathcal{W}$ is indeed an entanglement witness as any separable state is expressed as a convex mixture of states of the form $\rho_{sep}$. Let us now check under which condition it allows for entanglement detection. In other words, what is the condition to have $\tr(\rho\mathcal{W})<0$? Consider a state $\rho$ written in its operator Schmidt decomposition. Then,
	\begin{eqnarray}
	\tr(\rho\mathcal{W})&=&\mathds{1}-\tr\left(\sum_{ij}^{r}\lambda_i A_i A_j\otimes B_i B_j\right)\nonumber\\
	&=&\mathds{1}-\sum_{ij}^{r}\lambda_i \tr(A_iA_j)\tr(B_iB_j)\nonumber\\
	&=&1-\sum_{i}^{r}\lambda_i .
	\end{eqnarray}
	Entanglement is thus detected when $\sum_{i}^{r}\lambda_i >1$ which completes the proof.

%
%

\section{Formulation of the $\ket{\Omega}$ state }
\label{AppendixA}


We prove here that the state $\ket{\Omega}$ can be reexpressed as $\sqrt{\pi} \, U_{BS}^\dagger \ket{0}_{x_1}\ket{0}_{p_2}$ where $U_{BS}$ is the unitary of a 50:50 beam splitter. By definition, it is expressed in the Fock basis as $\ket{\Omega}=\sum_n\ket{n}\ket{n}$. Thus, if $\ket{x}$ and $\ket{y}$ are position states, we have
\begin{eqnarray}
\bra{x}\bra{y}\Omega\rangle&=& \sum_n \langle x | n\rangle \langle y | n\rangle \nonumber \\
&=&  \sum_n \langle x | n\rangle \langle n | y\rangle  \nonumber \\
&=&  \langle x | y\rangle = \delta(x-y) ,  \label{eq-omega-position}
\end{eqnarray}
so that $\ket{\Omega}$ can be written in the position basis as
\begin{equation}
\ket{\Omega}=\int dx \, dy \, \delta(x-y) \ket{x}\ket{y}=\int dx \, \ket{x}\ket{x}.
\end{equation}
Since the action of the 50:50 beam splitter unitary $U_{BS}$ on the position eigenstates is defined as
\begin{equation}
U_{BS}\ket{x}\ket{y}=\left|\frac{x-y}{\sqrt{2}}\right\rangle  \left|\frac{x+y}{\sqrt{2}}\right\rangle 
\end{equation} 
we have,
\begin{eqnarray}
\bra{x}\bra{y}U_{BS}^\dagger \ket{0}_{x_1}\ket{0}_{p_2}    
&=&\left\langle\frac{x-y}{\sqrt{2}}\right|\left\langle\frac{x+y}{\sqrt{2}}\right|
 \ket{0}_{x_1}\ket{0}_{p_2}  \nonumber\\
 &=&\left\langle\frac{x-y}{\sqrt{2}}\Big|x=0\right\rangle\left\langle\frac{x+y}{\sqrt{2}}\Big|p=0\right\rangle\nonumber\\
 &=&\delta\left(\frac{x-y}{\sqrt{2}}\right)\frac{1}{\sqrt{2\pi}}\nonumber\\
&=&\delta(x-y)  /  \sqrt{\pi}
\end{eqnarray}
where we have used the fact that $\mean{x|y}=\delta(x-y)$ and $\mean{x|p}=\frac{1}{\sqrt{2\pi}}e^{ipx}$. Comparing with Eq. \eqref{eq-omega-position}, this completes the proof that $\ket{\Omega}=\sqrt{\pi} \, U_{BS}^\dagger \ket{0}_{x_1}\ket{0}_{p_2}$.

\section{Computation of Eq.~(\ref{probGauss})}
\label{AppendixProb}
	We show how to directly compute Eq.~(\ref{probGauss}) for a two-mode state.  We have to compute the following integral where we set $x_1=p_2=0$:
	\begin{equation}
	\begin{aligned}
	\bra{x=0}&\bra{p=0}\rho'\ket{x=0}\ket{p=0}\\
	&=\int\,dx_2dp_1 W_{\rho'}(0,p_1,x_2,0)\\
	&=\frac{1}{(2\pi)^2\sqrt{\det\gamma'}}\int\,dx_2dp_1 \,e^{-\frac{1}{2}\left(\begin{smallmatrix} 0&p_1&x_2&0 	\end{smallmatrix}\right)(\gamma')^{-1}\left(\begin{smallmatrix}	0\\p_1\\x_2\\0	\end{smallmatrix}\right)}\\
	&=\frac{1}{(2\pi)^2\sqrt{\det\gamma'}}\int\,dx_2dp_1 \,e^{-\frac{1}{2}\left(\begin{smallmatrix} x_2&p_1 	\end{smallmatrix}\right)\Gamma\left(\begin{smallmatrix}	x_2\\p_1	\end{smallmatrix}\right)}\\
	&=\frac{1}{(2\pi)^2\sqrt{\det\gamma'}}\frac{2\pi}{\sqrt{\det\Gamma}}\\
	&=\frac{1}{2\pi\sqrt{\det\gamma'}}\sqrt{\frac{\det\gamma'}{\det\gamma_{w}}}=\frac{1}{2\pi\sqrt{\det\gamma_{w}}}
	\end{aligned}
	\end{equation}
	where $\Gamma$ is a $2\times 2$ matrix with elements given by $\Gamma_{1,1}=(\gamma')^{-1}_{3,3}$, $\Gamma_{2,2}=(\gamma')^{-1}_{2,2}$ and $\Gamma_{1,2}=\Gamma_{2,1}=(\gamma')^{-1}_{2,3}$ 

\section{Physical interpretation of the symmetrization procedure for a Gaussian state}
\label{AppendixB}

We present here an alternative way of computing the covariance matrix of the output state of a noiseless attenuation channel. To do so, we use the fact that the attenuation channel can be represented by a beam splitter followed by a postselection on the vacuum (see Fig.~\ref{symm}).

To filter the state, we process the modes of the subsystem with the higher variance (that is the higher value of the determinant of the reduced covariance matrix $A$ or $B$) through a noiseless attenuation channel and  then postselect the output conditionally to measuring the vacuum on the ancillary modes. Since it is a Gaussian channel, the output remains Gaussian.  Processing the state through this channel will have for effect to lower the variance of the mode that traveled through the channel. The output state is the symmetrized Gaussian state $\rho^{sym}$. We chose the attenuator factor (that is the transmittance $t$ of the beam splitter) so that the variance of both modes of $\rho^{sym}$ are equal that is $\det A= \det B$. In terms of covariance matrix, the procedure is as follows.

Let us have an $n \times n$ Gaussian state $\rho$ with covariance matrix (\ref{covmatrix}) and let us assume $ \det A\geq \det B$ with no loss of generality.
We then add $n$ vacuum state to the system. The new covariance matrix thus reads
\begin{equation}
\gamma_{\ket{0}^{\oplus n}+\rho}=\begin{pmatrix}
\frac{1}{2} \mathds{1}_{2n}&0\\0&\gamma
\end{pmatrix}.
\end{equation}
We now apply the transformation $\mathcal{S}\oplus\mathds{1}$ where $\mathcal{S}$ is the beam splitter transformation 
\begin{equation}
\mathcal{S}\oplus\mathds{1}=
\begin{pmatrix}
\sqrt{t}\mathds{1}_{2n}&-\sqrt{1-t}\mathds{1}_{2n}&0\\
\sqrt{1-t}\mathds{1}_{2n}&\sqrt{t}\mathds{1}_{2n}&0\\
0&0&\mathds{1}_{2n}
\end{pmatrix}.
\end{equation}
to the covariance matrix $\gamma_{\ket{0}^{\oplus n}+\rho}$:
\begin{equation}
\begin{aligned}
\gamma_{\mathcal{S}\oplus\mathds{1}}=\mathcal{S}\oplus\mathds{1}\,\gamma_{\ket{0}^{\oplus n}+\rho}\,\mathcal{S}^\dag\oplus\mathds{1}=\begin{pmatrix}
\mathcal{A}&\mathcal{C}^T\\\mathcal{C}&\mathcal{B}
\end{pmatrix}
\end{aligned}.
\end{equation}
Finally, we reduce the covariance matrix conditionally to measuring the vacuum on the first mode, that is \cite{fiurasek,Giedke}
\begin{equation}
\gamma^{sym}=\mathcal{B}-\mathcal{C}\left(\mathcal{A}+\frac{1}{2}\mathds{1}\right)^{-1}\mathcal{C}^T = \begin{pmatrix}
\mathcal{A'}&\mathcal{C'}^T\\\mathcal{C'}&\mathcal{B'}
\end{pmatrix}.
\label{rhosym}
\end{equation}
At this stage we obtained a new covariance matrix which depends on $t$. If the filtration procedure is such that we want to symmetrize the covariance matrix, we need to make sure that the determinant of the covariance matrices of both subsystems are equal ($\det \mathcal{A'}=\det \mathcal{B'})$. 

\subsection*{Explicit calculation for the two-mode case}

Let us do the explicit calculations to obtain the symmetrized covariance matrix of a two-mode Gaussian state initially expressed in its normal form \cite{duan},
\begin{equation}
\gamma_\rho=\begin{pmatrix}
a&0&c&0\\0&a&0&d\\c&0&b&0\\0&d&0&b
\end{pmatrix}.
\end{equation}
Any covariance matrix of a two-mode state can be transformed into this form
by applying local linear unitary operations which are combinations of squeezing
transformations and rotations. These operations do not influence the separability
of the state, and are thus always allowed when studying entanglement. Note that we assume $a\geq b$ with no loss of generality.
We first add the vacuum state to the system. The new covariance matrix  reads
\begin{equation}
\gamma_{\ket{0}+\rho}=\begin{pmatrix}
1/2&0&0&0&0&0\\0&1/2&0&0&0&0\\
0&0&a&0&c&0\\0&0&0&a&0&d\\0&0&c&0&b&0\\0&0&0&d&0&b
\end{pmatrix}.
\end{equation}
We then apply the transformation $\mathcal{S}\oplus\mathds{1}$
to the covariance matrix $\gamma_{\ket{0}+\rho}$ to obtain
\begin{equation}
\begin{aligned}
\gamma_{\mathcal{S}\oplus\mathds{1}}=\mathcal{S}\oplus\mathds{1}\,\gamma_{\ket{0}+\rho}\,\mathcal{S}^\dag\oplus\mathds{1}=\begin{pmatrix}
\mathcal{A}&\mathcal{C}^T\\\mathcal{C}&\mathcal{B}
\end{pmatrix}
\end{aligned}
\end{equation}
with
\begin{eqnarray}
\mathcal{A}&=&\left(
\begin{array}{cc}
-t a+a+\frac{t}{2} & 0 \\
0 & -t a+a+\frac{t}{2} \\
\end{array}
\right)\\
\mathcal{B}&=&\left(
\begin{array}{cccc}
\left(a-\frac{1}{2}\right) t+\frac{1}{2} & 0 & c \sqrt{t} & 0 \\
0 & \left(a-\frac{1}{2}\right) t+\frac{1}{2} & 0 & d \sqrt{t} \\
c \sqrt{t} & 0 & b & 0 \\
0 & d \sqrt{t} & 0 & b \\
\end{array}
\right)\nonumber\\
\mathcal{C}&=&\left(
\begin{array}{cc}
\frac{1}{2} (1-2 a) \sqrt{-(t-1) t} & 0 \\
0 & \frac{1}{2} (1-2 a) \sqrt{-(t-1) t} \\
-c \sqrt{1-t} & 0 \\
0 & -d \sqrt{1-t} \\
\end{array}
\right).\nonumber
\end{eqnarray}
Finally, we reduce the covariance matrix conditionally to measuring the vacuum on the first mode, that is
\begin{widetext}
\begin{equation}
\begin{aligned}
\gamma^{sym}&=\mathcal{B}-\mathcal{C}\left(\mathcal{A}+\frac{1}{2}\mathds{1}\right)^{-1}\mathcal{C}^T   =\left(
\begin{array}{cccc}
\frac{t-2 a (t+1)-1}{4 a (t-1)-2 (t+1)} & 0 & \frac{2 c \sqrt{t}}{-2 a (t-1)+t+1} & 0 \\
0 & \frac{t-2 a (t+1)-1}{4 a (t-1)-2 (t+1)} & 0 & \frac{2 d \sqrt{t}}{-2 a (t-1)+t+1} \\
\frac{2 c \sqrt{t}}{-2 a (t-1)+t+1} & 0 & \frac{2 (t-1) c^2}{-2 a (t-1)+t+1}+b & 0 \\
0 & \frac{2 d \sqrt{t}}{-2 a (t-1)+t+1} & 0 & \frac{2 (t-1) d^2}{-2 a (t-1)+t+1}+b \\
\end{array}
\right).
\end{aligned}
\label{rhosymSpecific}
\end{equation}
\end{widetext}
The covariance matrix will be symmetrized providing that the determinant of the covariance matrices of both subsystems are equal ($\det A=\det B)$, meaning 
\begin{eqnarray}
\left(\frac{t-2 a (t+1)-1}{4 a (t-1)-2 (t+1)}\right)^2 &=&\left(\frac{2 (t-1) c^2}{-2 a (t-1)+t+1}+b\right)\nonumber\\
&\times&\left(\frac{2 (t-1) d^2}{-2 a (t-1)+t+1}+b\right).\nonumber\\
\end{eqnarray}
Solving this equation for $t$ gives the transmissivity of the beam splitter necessary to obtain a  Gaussian state with a symmetric covariance matrix.

\end{document}